\DeclareMathOperator*{\Max}{Max}
\newtheorem{theorem}{Theorem}[section]
\newtheorem{corollary}{Corollary}[theorem]
\begin{document}
		\title{Quantum channels and some absolute properties of quantum states}

	\author{Tapaswini Patro}
	\email[]{p20190037@hyderabad.bits-pilani.ac.in}
	
	\affiliation{Department of Mathematics, Birla Institute of Technology and Science-Pilani, Hyderabad Campus, Telangana, India}
	
	\author{Kaushiki Mukherjee}
	\email[]{kaushiki.wbes@gmail.com}
	
	\affiliation{Department of Mathematics, Government Girls' General Degree College, Ekbalpore, Kolkata, India}

	\author{Nirman Ganguly}
	\email[]{nirmanganguly@gmail.com}
	
	\affiliation{Department of Mathematics, Birla Institute of Technology and Science-Pilani, Hyderabad Campus, Telangana, India}

\begin{abstract}
	
Quantum systems are vulnerable against environmental interactions which may result in depletion of potential resources. The decay in the resources may be to the extent that they cannot be retrieved even with a nonlocal unitary action on the whole system. Two important figure of merits in the context of quantum information are the fully entangled fraction (FEF)  and conditional entropy of a composite quantum system. While FEF plays a key role in teleportation, negativity of conditional entropy assumes significance in state merging and dense coding. A state may lose such merits and may move to an absolute regime, where even a global unitary fails to reclaim those merits. In the present work, we probe the action of some quantum channels in two qubits and two qudits and find that some quantum states move to the absolute regime under the action. Since global unitary operations are unable to retrieve them back to the non-absolute regime, we provide a prescription for the retrieval using an entanglement swapping network. We also provide an explicit illustration of our prescription. Furthermore, we extend the notion of absoluteness to conditional Rényi entropies and find the required condition for a state to have absolute conditional Rényi entropy non-negative (ACRENN) property. Exploiting the Bloch-Fano decomposition of density matrices, we characterize such states. We then extend the work to include the marginals of a tripartite system and provide for their characterization with respect to the aforementioned absolute properties.

\end{abstract}
\date{\today}
	
\maketitle

\textbf{Keywords:} Quantum Channels, Fully Entangled Fraction, Conditional Rényi entropy, Entanglement swapping.
	
\section{Introduction}

The theory of quantum information processing \cite{nie,wildeqit} promises to provide significant advantages over its classical counterpart. In this context, it is important to identify resources unique to quantum theory which can provide for such advantages. Entanglement \cite{ent}, Bell nonlocality \cite{bellnonlocality}, steerability \cite{steerability}, negativity of conditional entropy \cite{mahathi1}, coherence \cite{coherence} are some of them. Investigations on potent resources and methods to harness them have buttressed the study on quantum resources \cite{resource} in general.

However, when quantum systems interact with the environment, such resources may be exposed to depletion. The interaction with environment is modelled in the language of completely positive trace preserving maps, usually termed as quantum channels \cite{nie}. Literature contains a large volume of work on quantum channels, as they are also a medium for sending quantum and classical information (see \cite{survey} and references therein). Regarding the loss of resources, entanglement breaking maps \cite{ent1} have been studied in the context of quantum entanglement. With regard to nonlocality one has the nonlocality breaking maps \cite{ent2,env}.

The extent of decay in the resources might vary across different scenarios. In some situations the quantum state may enter an \textit{absolute regime} during the process. An explanation regarding the term \textit{absolute regime}  warrants attention here. If we take the case of entanglement, then entanglement depends on the choice of the factorization of the underlying composite Hilbert space \cite{Hahn}. In some basis the quantum state may be entangled, while in others it may be separable \cite{volume}. Quite interestingly, there are states which remain separable in any basis, consequently termed as absolutely separable states \cite{ AS}. Precisely, if we start with a bipartite state $ \rho $, and find that $ U \rho U^{\dagger} $ remains separable for any unitary operator $ U $, then $ \rho $ is separable from spectrum \cite{twoqubit,John, AS1} or absolutely separable \cite{AS}. Since, a unitary conjugation preserves the spectrum, the condition for absolute separability is a function of the spectrum of the density matrix, i.e., its eigenvalues. However, necessary and sufficient condition for membership in the absolutely separable class (in terms of eigenvalues) exists only in the qubit-qubit \cite{twoqubit} and qubit-qudit system \cite{John}, the problem in higher dimensions remains open \cite{knill}. Absolutely separating maps were investigated in \cite{separabilitymap}, which can turn a state to being absolutely separable. Thus, in such cases the states enter the \textit{absolute regime} pertaining to entanglement. Studies on absoluteness of quantum correlations were extended beyond entanglement to include absolute positive partial transpose \cite{absoluteppt}, absolute conditional von Neumann entropy nonnegative \cite{spatro, Mahathi}	, absolutely classical spin states \cite{classicalspin}, absolutely $ k- $ incoherent states \cite{k-copy}, Bell-CHSH nonlocality \cite{BCHGU}, absolute non-violation of a three settings steering inequality \cite{steeringinequality}, discord \cite{discord1} and absolute fully entangled fraction \cite{af}. Analytical characterizations of absolutely separable states were provided in \cite{Ganguly2014,extreme}.

Quantum conditional entropies can be negative, unlike the classical entropies. Negative conditional von Neumann entropy provides quantum advantage in state merging \cite{state} and dense coding \cite{dense}. However, if initially a state doesn't have a negative entropy, it may be converted to a state with negative entropy with proper application of an unitary gate (i.e., an unitary transformation). The unitary should be global as local unitary cannot change such characteristics. However, a state belonging to the absolute regime cannot be transformed in this way, an example being the states preserving nonnegative conditional von Neumann entropy under any global unitary action \cite{spatro, Mahathi}. A similar justification goes for fully entangled fraction (FEF). FEF $>1/d$ is a significant benchmark for quantum states to be useful for teleportation \cite{tele,horodecki} and $k-$ copy nonlocality \cite{kcopy}. Interestingly, here too, there are certain states which remain in the absolute regime under the action of global unitaries \cite{af}. 

In quantum information theory, every relevant transition can be modelled in terms of quantum channels and it will not be an overstatement that quantum channels are all-encompassing \cite{wildeqit}. As mentioned before, due to environmental interaction a quantum state may lose some of its important characteristics to the extent that it enters an absolute regime. Our work considers two of the resources mentioned above namely, (a) negative conditional entropy and (b) FEF $>1/d$, and probes instances under which states lose these characteristics \textit{absolutely} due to environmental influence. In the present work, we investigate two scenarios- (i) one involving states that preserve nonnegative conditional von Neumann entropy under any global unitary operation ($ \mathfrak{AC} $) and (ii) the other involving states having an absolute fully entangled fraction ($ \mathfrak{AF} $). We find that quantum states which were previously in the non-absolute regime enter the absolute regime under the action of some quantum channels. The studies are done for both $ 2 \otimes 2 $ and $ d \otimes d $ dimensions. When a state enters the absolute regime, it is not possible to retrieve the concerned resource even through global unitary action. Therefore, we lay a prescription using which one can make the conversion to the non-absolute regime pertaining to $ \mathfrak{AC} $ . The prescription is given using an entanglement swapping network \cite{entswap1,entswap2}. We also provide a demonstration of the prescription.

In the area of quantum information theory, quantum resources have undergone multifaceted analysis. In this context, one important probe has been to identify environmental interactions which may result in depletion of the resource. This may be to the extent where merits are lost in a manner, when even non-local unitary action fails to redeem such merits. That is, as noted before, the state enters an absolute regime. Details of our motivation are now listed below:
 
 Firstly, we identify certain environmental noise which results in the loss of the characteristic. The characteristics considered here are the negativity of conditional von Neumann entropy and the other, FEF being greater than the classical benchmark (i.e., $1/d$). Connection of negative conditional entropy to state merging \cite{state}, dense coding \cite{dense} and FEF $> \frac{1}{d}$ to teleportation \cite{tele}, are well established. In this perspective, our work studies environmental influences which takes an initially potent state to an absolute regime pertaining to the two characteristics mentioned above. In a related perspective one may also note this as the environmental noise that should be avoided during the implementation of the said protocols.
 
  Secondly, since global unitary action cannot create the resource from states in the absolute regime, we provide a prescription using an entanglement swapping network to retrieve the resource. Once a state enters the absolute regime, it becomes \textit{useless} for the concerned quantum information protocol. Therefore, the swapping network protocol (details given in sec.\ref{swapp}) stochastically brings back the state to the non-absolute regime. The non-absolute regime contains both resourceful states and states which can be transformed to being resourceful by the application of a unitary gate (which previously was not possible). We give an explicit demonstration of this fact in sec. \ref{appl}.
 
 Furthermore, we have considered the negativity of conditional Rényi entropy. Negativity of conditional Rényi entropy has been related to Bell nonlocality \cite{Hororenyi} and recently to quantum steering \cite{Steerrenyi}. The discussion above raises the question, whether is it always possible to create a state with negative conditional Rényi entropy from a state having a nonnegative one. The answer is "no", as there are certain states which preserve nonnegativity under any global unitary operation. We also characterize such states here.

 Furthermore, in our work we extend the notion of \textit{absoluteness} to include conditional Rényi entropies. In the asymptotic regime of quantum information, von Neumann entropies remain a pertinent figure of merit. However in its single-shot counterpart, it is the Rényi entropy that assumes significance. We investigate the effect of global unitary action on the conditional Rényi entropy of a quantum system and find the condition under which the non-negativity of conditional Rényi entropy is preserved. Subsequently, the characterization is also done using the Bloch-Fano representation of the quantum state. Using Schur concavity \cite{Renyi1} of the Rényi entropy (for some specified value of a parameter), we also identify the transitions under which the aforementioned absolute property remains invariant. Since, partial trace i.e., discarding a party can also be formulated in the language of quantum channels, we then investigate the reduced subsystems of a tripartite system in the context of absolute conditional Rényi 2-entropy.

		\par The structure of our work is as follows: In sec. \ref{pre}, we provide the the relevant definitions and notations to be used in the work. The transitions from the non-absolute regime to absolute regime are studied in sec.\ref{nec}. Characterization of absolute conditional Rényi entropy non-negative $\mathbf{(ACRENN)}$ property is discussed in \ref{Renyi}. In sec.\ref{tripartite}, we have characterized the membership in $\mathbf{ACRE2NN}$ (states having absolute conditional Rényi 2-entropy) using the Bloch-Fano representation, wherein we also study the marginals of a tripartite system. We discuss the specific entanglement swapping network in sec.\ref{swapp}. An application of our work is provided in sec. \ref{appl}. Finally, we present our conclusions in sec.\ref{con}.

% \section{Motivation} \label{mot}

	\section{Preliminaries and Notations}\label{pre}
	In this section, we introduce some preliminary concepts and notations that are required for our study. Here $ \mathsf{H_{AB}} $ denotes the finite-dimensional Hilbert space of the composite system $ \mathsf{AB} $. $ \mathfrak{B}(\mathsf{H}) $ denotes the set of bounded linear operators acting on the Hilbert space $ \mathsf{H} $. The respective notations for different classes of states are given as follows: absolute fully entangled fraction ($\mathfrak{AF}$), absolute conditional von Neumann entropy non-negative ($\mathfrak{AC}$) and absolute conditional Rényi entropy non-negative ($\mathbf{ACRENN}$).
	\subsection{Absolute Fully Entangled Fraction ($\mathfrak{AF}$):} For a density matrix $ \rho_{d \otimes d} \in \mathfrak{B}(\mathsf{H_{AB}}) $, the fully entangled fraction (FEF) is given as,
	\begin{equation}
		F(\rho_{d \otimes d})=\Max\limits_{U_l}\:\langle \psi_{+}|(\mathbb{I}_d\otimes U_l^{\dagger})\: \rho_{d \otimes d} \:(\mathbb{I}_d\otimes U_l)|\psi_{+}\:\rangle
	\end{equation}
	where the maximization is done over all local unitary operators $ U_l $ and $|\psi_{+}\rangle =\frac{1}{\sqrt{d}}\:\sum_{i=0}^{d-1}\:|ii \rangle $ is the maximally entangled state \cite{horodecki}.
	\par $ \mathfrak{F} $ denotes the set of density matrices whose FEF is bounded by $ \frac{1}{d} $, i.e., $ \mathfrak{F} = \lbrace \rho_{d \otimes d} \in \mathfrak{B}(\mathsf{H_{AB}}): F(\rho) \le \frac{1}{d} \rbrace $. If $ U_{nl} $ denotes a non-local unitary operator, then the states having absolute FEF is denoted by $ \mathfrak{AF} = \lbrace \sigma_{af} \in \mathfrak{B}(\mathsf{H_{AB}}): F(U_{nl} \sigma_{af} U_{nl}^\dagger) \le \frac{1}{d}, \forall~ U_{nl} \rbrace $. A state belonging to $\mathfrak{AF} $  has been characterized in terms of its eigenvalues, precisely a state belongs to $\mathfrak{AF} $ if and only if its maximum eigenvalue is $\leq \frac{1}{d}$ \cite{af}.

\subsection{Absolute Conditional Von Neumann Entropy Non Negative States($ \mathfrak{AC} $): }The von Neumann entropy of a quantum state $ \rho_{AB} $ is denoted by $ S(\rho_{AB}) = -\mathrm{Tr}(\rho_{AB} \log_2 \rho_{AB}) $ , with the conditional entropy as $ C(\rho_{AB}) = S(\rho_{AB}) - S(\rho_{B}) $. The set $ \mathfrak{AC} $ represents the states whose conditional entropy remains non-negative even under global unitary action. The characterization of $ \mathfrak{AC} $ in bipartite $2\otimes 2$ dimensions was characterized in \cite{spatro}, which tells that a $\rho_{2\otimes 2}\in \mathfrak{AC}$, iff $S(\rho_{2\otimes 2})\geq 1$. The characterization was extended to $d\otimes d$ dimensions, wherein $\rho_{d\otimes d}\in \mathfrak{AC}$ if and only if  $S(\rho_{d\otimes d})\geq \log_2 d$ \cite{Mahathi}.
\subsection{Majorization}
In many areas of probability and information theory, majorization plays a crucial role, whenever one needs to compare disorder among different systems. Let us suppose that there are two $d$-dimensional real vectors $r$ and $s$. We will say that $r$ is majorized by $s$, which is written as $r\prec s$, if there exist a $d$-dimensional permutation matrix $P_{j}$ and a probability distribution ${p_{j}}$ such that
\begin{align}
r=\sum_{j} p_{j} P_{j} s
\end{align}
\par In other words for $r,s \in \mathbb{R}^{d}$, we define the majorization relation $\prec$ that is $r$ is majorized by $s$ as,
$r\prec s$ iff $\sum_{i=1}^{d}r_{i}=\sum_{i=1}^{d}s_{i}$ and $\forall k\in \left\lbrace 1,2,...d-1\right\rbrace$, $\sum_{i=1}^{k} r_{i}^{\downarrow}\leq \sum_{i=1}^{k} s_{i}^{\downarrow}$ \cite{majorization}, where $ r_{i}^{\downarrow}, s_{i}^{\downarrow} $ denote respectively components of the vectors $ r,s $ arranged into non-increasing order . If $ A $ and $ B $ are Hermitian matrices then we define $ A \prec B $, $ A $ is majorized by $ B $, if $ \lambda(A) \prec \lambda(B) $, where $ \lambda(A) $ is the vector of eigenvalues of $ A $, arranged into non-increasing order \cite{majorization}.
		\subsection{Quantum Channels}
			Quantum channels are completely positive trace-preserving (CPTP) maps transmitting both quantum and classical information. Every quantum channel admits an operator sum representation. Let $\rho$ be a quantum state and $\epsilon$ be a quantum channel. Then the action of the quantum channel $\epsilon$  on the state $\rho$ can be expressed as follows,
			\begin{align}\label{qc}
				\epsilon(\rho)=\sum_{i}K_{i}\rho K_{i}^{\dagger}
			\end{align}
			where $K_{i}'s$ are the Kraus operators satisfying the property $\sum_{i}K_{i}^{\dagger}K_{i}=\mathbb{I}$. In what follows below, we note the actions of the channels important for our study.
		
		 \subsubsection*{ Global depolarizing channel}
The global depolarizing channel is a unique class of quantum channel that transforms a state into a convex combination of itself and the maximally mixed state. i.e., for $\rho_{d \otimes d}$, the action of global depolarizing channel is given as follows
\begin{align}\label{gc}
	\rho_{d \otimes d}^{'}=(1-p)\rho_{d \otimes d}+\frac{p}{d^{2}} \mathbb{I}_{d^2},
\end{align}
where $ p $ is the channel parameter with $0\leq p\leq1$.\\

		\par For two qubit systems, the description of some quantum channels to be used are given in Table \ref{table1} below,
		
		\begin{table}[!h]
\begin{center}
\scalebox{0.8}{%
\begin{tabular}{ || m{4cm} || m{6cm}||  }
  \hline
  \begin{center}
  	\textbf{Quantum channels}
  \end{center} & \begin{center}
  \textbf{Kraus operators}
\end{center} \\
  \hline
\begin{center} $1$.Phase-flip channel \end{center} &  \begin{align*}
 K_0=\sqrt{1-p}\mathbb{I}_2
\end{align*} \begin{align*}
 K_1=\sqrt{1-p}\sigma_{z}
\end{align*}  \\
  \hline
  \begin{center}$2$.Bit-flip channel\end{center} &  \begin{align*}
 K_0=\sqrt{1-p}\mathbb{I}_2
\end{align*} \begin{align*}
 K_1=\sqrt{1-p}\sigma_{x}
\end{align*}  \\
  \hline
  \begin{center}$3$.Phase damping channel\end{center} & \[
  K_{0} =
  \left({\begin{array}{cc}
    1 & 0 \\
    0 & \sqrt{1-p} \\
  \end{array} }\right)
\]   \[
  K_{1} =
  \left( {\begin{array}{cc}
    0 & 0 \\
    0 & \sqrt{p} \\
  \end{array} }\right)
\] \\
  \hline
  \begin{center}$4$.Amplitude damping channel\end{center} & \[
  K_{0} =
  \left({\begin{array}{cc}
    1 & 0 \\
    0 & \sqrt{1-p} \\
  \end{array} }\right)
\]   \[
  K_{1} =
  \left( {\begin{array}{cc}
    0 & \sqrt{p} \\
    0 & 0 \\
  \end{array} }\right)
\] \\
  \hline
 \begin{center} $5$.Depolaring channel\end{center} & \begin{align*}
 K_{0}=\sqrt{1-p} \mathbb{I}_2
\end{align*}  \begin{align*}
K_{1}=\sqrt{\frac{p}{3}}\sigma_{x}
\end{align*} \begin{align*}
K_{2}=\sqrt{\frac{p}{3}}\sigma_{y}
\end{align*}  \begin{align*}
K_{3}=\sqrt{\frac{p}{3}}\sigma_{z}
\end{align*}  \\
  \hline
  \end{tabular}}
 \caption{\label{demo-table} Description of different quantum channels. The action of a quantum channel $\Phi$ on a system $\omega$ is $\Phi(\omega)= \sum\limits_i K_i \omega K_i^\dagger $.}\label{table1}
\end{center}
\end{table}
\section{From the non-absolute to the absolute regime}\label{nec}
In this section, we show that how the action of quantum channels results in the depletion of quantum resources, pertaining to the figure of merits discussed earlier.
\subsection{Pure State in two qubits}
Let us now consider a pure entangled state \cite{schmidt},
\begin{equation}\label{ps}
	|\psi\rangle=\cos\theta|00\rangle+\sin\theta|11\rangle,
\end{equation}
where $\theta\in(0,\frac{\pi}{2})$.

\subsubsection{Global depolarizing channel:}
The action of the global depolarizing channel on the state results in the following transformed state,
\small
\begin{equation}\label{gdc}
	\rho_{2\otimes 2}^{'}(G)=\begin{pmatrix}
		\frac{1}{4}(1+p+2p\cos 2\theta)& 0 & 0 & p\cos\theta\sin\theta\\
		0& \frac{1-p}{4} & 0 &0 \\
		0& 0 &  \frac{1-p}{4}&0 \\
		p\cos\theta\sin\theta& 0 &  0& \frac{1}{4}(1+p-2p\cos 2\theta)
	\end{pmatrix}
\end{equation}
where, $p\in (0,1)$ and $\theta\in (0,\frac{\pi}{2})$.

The eigen values of $\rho_{2\otimes 2}^{'}(G)$ are $\frac{1-p}{4}$, $\frac{1-p}{4}$, $\frac{1-p}{4}$ and  $\frac{1+3p}{4}$, which are independent of $\theta$. The resulting state  $\rho_{2\otimes 2}^{'}(G)$ belong to $\mathfrak{AC}$ and $\mathfrak{AF}$ for the values of $0\leq p\leq 0.747614$ and  $0\leq p\leq 0.3333$ respectively.\\
In case of the amplitude and phase damping channels given below we have used a double interaction meaning that the channel is applied on both the subsystems.
\subsubsection{Amplitude Damping Channel}
The transformed state under the action of the amplitude damping channel is given by,
\begin{widetext}
	\begin{equation}\label{adc}
		\rho_{2\otimes 2}^{'}(Amp)=\begin{pmatrix}
			\cos^2{\theta}+p_{1}p_{2}\sin^2\theta& 0 & 0 & \sqrt{1-p_{1}}\sqrt{1-p_{2}}\cos\theta\sin\theta\\
			0& -p_{1}(-1+p_{2})\sin^2\theta & 0 &0 \\
			0& 0 &  -p_{2}(-1+p_{1})\sin^2\theta&0 \\
			\sqrt{1-p_{1}}\sqrt{1-p_{2}}\cos\theta\sin\theta& 0 &  0& (-1+p_{1})(-1+p_{2})\sin^2\theta
		\end{pmatrix}
	\end{equation}
\end{widetext}
where,  $\theta\in (0,\frac{\pi}{2})$ and $p_{1}$, $p_{2}$ are the channel parameters.

For  $p_{1}=p_{2}=p$ and $\theta=\frac{\pi}{4}$, the eigen values of the state $\rho_{2\otimes 2}^{'}(Amp)$ are $\frac{1}{2}(1+p^2-p+\sqrt{1+2p^2-2p})$,  $\frac{1}{2}(1+p^2-p-\sqrt{1+2p^2-2p})$, $\frac{1}{2}(p-p^2)$ and $\frac{1}{2}(p-p^2)$. The resulting state $\rho_{2\otimes 2}^{'}(Amp)$ belong to $\mathfrak{AC}$ for the values of $0.267284\leq p \leq 0.732716$. It is interesting to note that it never belongs to $\mathfrak{AF}$.

\subsubsection{Phase Damping Channel}
The transformed state under the action of the phase damping channel is given by,
\begin{equation}\label{pdc}
	\rho_{2\otimes 2}^{'}(Ph)=\begin{pmatrix}
		0& 0 & 0 & 0\\
		0& \cos^2\theta & -(-1 + p) \cos\theta \sin\theta &0 \\
		0& -(-1 + p) \cos\theta \sin\theta & \sin^2\theta&0 \\
		0& 0 &  0&0
	\end{pmatrix}
\end{equation}
where,  $\theta\in (0,\frac{\pi}{2})$ and $p$ is the channel parameter.

The eigen values of the state $\rho_{2\otimes 2}^{'}(Ph)$ are $0$,  $0$, $\frac{1}{4}(2-\sqrt{2}\sqrt{2-2p+p^2+2p\cos4\theta-p^2\cos 4\theta})$ and \\$\frac{1}{4}(2+\sqrt{2}\sqrt{2-2p+p^2+2p\cos4\theta-p^2\cos 4\theta})$. The resulting state $\rho_{2\otimes 2}^{'}(Ph)$ does not belong to $\mathfrak{AC}$ as the von Neumann entropy of the resulting state is less than $1$, for the value of $0\leq\theta\leq\frac{\pi}{2}$ and $0\leq p\leq 1$. Hence, this state will never belong to $\mathfrak{AC}$.
\par  The maximum eigen value of $\rho_{2\otimes 2}^{'}(Ph)$ is $\frac{1}{4}(2+\sqrt{2}\sqrt{2-2p+p^2+2p\cos4\theta-p^2\cos 4\theta})$, which is less than $\frac{1}{2}$ for $\theta=\frac{\pi}{4}$ and $p=1$. Hence, it will belong to $\mathfrak{AF}$ for $\theta=\frac{\pi}{4}$ and $p=1$.
\subsection{A mixed state in two qubits}
The following two parameter family of mixed states was considered in \cite{Acin} in the context of hidden nonlocality,
\begin{equation}
	\rho_{2\otimes 2}(\lambda,\theta)=\begin{pmatrix}
		\frac{1-\lambda}{2}& 0 & 0 & 0\\
		0& \lambda\sin^{2}\theta & \frac{\lambda}{2}\sin 2\theta &0 \\
		0& \frac{\lambda}{2}\sin 2\theta &  \lambda\cos^{2}\theta&0 \\
		0& 0 &  0& \frac{1-\lambda}{2}
	\end{pmatrix}
\end{equation}
where, $\lambda\in (0,1)$ and $\theta\in (0,\frac{\pi}{2})$.

\par For the initial choice $\lambda=0.9$ and $\theta=\frac{\pi}{4}$ the state is neither in $\mathfrak{AF}$ nor in $\mathfrak{AC}$. Now, it is observed that for some instances of the channel parameter, the transformed state moves to the absolute regime. We note here that we have again used a double interaction.
	
Table \ref{table2} below depicts the observations, where $R_1$ and $R_2$ respectively refers to the ranges for $ \mathfrak{AC}$  and $ \mathfrak{AF}$.

	\begin{table}[!h]
	\begin{center}
		\begin{tabular}{||c  c  c  c||}
		\hline
		Channels & State Parameter  & $ R_1 $ ~~& $ R_2 $ \\ [0.5ex]
		\hline\hline
		Bit-flip & $\lambda=0.9$ & 0.0890506 $\leq$ p $\leq$ 0.910949 ~~~~~& 0.378732 $\leq$ p $\leq$ 0.621268\\
		\hline
		Phase-flip &  $\lambda=0.9$  &0.0545493 $\leq$ p $\leq$ 0.945451  ~~~~~& 0.333333 $\leq$ p $\leq$ 0.666667  \\
		\hline
		Depolarizing &  $\lambda=0.9$ &-- &  0.271286 $\leq$ p$\leq$1\\
		\hline
		Phase damping &  $\lambda=0.9$  & 0.206295 $\leq$ p $\leq$ 1  ~~~~~&  0.888889 $\leq$ p $\leq$ 1\\
		\hline
		
	\end{tabular}
	\caption{Table depicting the transition to the absolute regime} \label{table2}
	\end{center}
	\end{table}
\subsection{Global depolarizing Channel in two qudits}
 We now consider the isotropic state in two qudits,
 \begin{align}\label{ISO}
 	X_{d\otimes d}(iso)=\beta|\psi^{+}\rangle \langle\psi^{+}|+\frac{1-\beta}{d^2}\mathbb{I}_{d^2},
 \end{align}
 where $|\psi^{+}\rangle =\frac{1}{\sqrt{d}}\sum_{i=0}^{d-1}|ii\rangle $ with $-\frac{1}{d^2-1}\leq\beta\leq 1$. The state $X_{d\otimes d}(iso)$ is separable for $-\frac{1}{d^2-1}\leq \beta\leq \frac{1}{d+1}$ and entangled for $\frac{1}{d+1}< \beta \leq 1$. All entangled isotropic states are useful for teleportation and for all the separable range of parameter $\beta$,  the isotropic state belongs to $ \mathfrak{AF}$ \cite{af}.\\
 After the action of the global depolarizing channel on the isotropic state, the new transformed state is
  \begin{align}\label{iso}
 	X_{d\otimes d}^{'}(iso)=\notag &(1-\lambda)	X_{d\otimes d}(iso)+\frac{\lambda}{d^2}\mathbb{I}_{d^2}\\
 	=&((1-\lambda)\beta)|\psi^{+}\rangle \langle\psi^{+}|+(1-((1-\lambda)\beta))\frac{\mathbb{I}_{d^2}}{d^2}
 \end{align}
 where $\lambda$ is a channel parameter with $0\leq\lambda\leq1$.
 \par  The new state $X_{d\otimes d}^{'}(iso)$ is also an isotropic state with the new parameter $(1-\lambda)\beta $. So, $X_{d\otimes d }^{'}(iso) \in \mathfrak{AF}$, if $(1-\lambda)\beta \leq \frac{1}{1+d}$.

 We now turn our attention to $ \mathfrak{AC}$. Eigen values of  $X_{d\otimes d }^{'}(iso)$ are given as follows:
 $\frac{1+\beta(d^{2}-1)(1-\lambda)}{d^{2}}$ and $\frac{1+\beta(\lambda-1)}{d^{2}}$ with multiplicity ($d^{2}-1$). We know that for $X_{d\otimes d }^{'}(iso)\in\mathfrak{AC}$, its von Neumann entropy should be $\geq \log_2 d$. The von Neumann entropy of the state $X_{d\otimes d }^{'}(iso)$ is given as,
 \begin{align*}
- \frac{(1+\beta(\lambda-1))(d^{2}-1)\log_2[\frac{1+\beta(\lambda-1)}{d^{2}}]+(1-\beta(\lambda-1)(d^{2}-1))\log_2 [\frac{1-\beta(\lambda-1)(d^{2}-1)}{d^{2}}]}{d^{2}}
 \end{align*}
 We now fix the parameter $\beta=0.8$ and then calculate the range of $\lambda$ so that the state $X_{d\otimes d}^{'}(iso)$ belongs to  $\mathfrak{AC}$ . Table \ref{table3} below shows the different ranges of $\lambda$ pertaining to the membership in $\mathfrak{AC}$.
	
	\begin{table}[!h]
	\begin{tabular}{||c  c  c  ||}
		\hline
		Dimension & State Parameter  & Channel parameter  \\ [0.5ex]
		\hline\hline
		$d=2$ & $\beta=0.8$ & $0.0654827 \leq \lambda \leq 1$ \\
		\hline
		$d=3$ &   $\beta=0.8$ &$0.108858  \leq \lambda \leq 1$ \\
		\hline
		$d=4$ &   $\beta=0.8$ &$0.136226  \leq \lambda \leq 1 $ \\
		\hline
		$d=5$ &  $\beta=0.8$  & $0.15533  \leq \lambda \leq1$ \\
		\hline
		\end{tabular}
	\caption{Table depicting the range for $\mathfrak{AC}$ with $\beta=0.8$ } \label{table3}
	
	\end{table}
	In the generic case, it is difficult to obtain an exact value for the von Neumann entropy of quantum states living in arbitrary dimensions. An elegant method to estimate the von Neumann entropy in terms of the trace of different powers of a density matrix was obtained in \cite{AMEVN}. Explicitly, the von Neumann entropy was expressed as an infinite series, given below \cite{AMEVN},
	\begin{align}
	S(\rho)=\sum_{k=1}^{\infty}\frac{g(k)}{k}
	 =(1-R_{2})+\frac{1}{2}(1-2R_{2}+R_{3})+\frac{1}{3}(1-3R_{2}+3R_{3}-R_{4})+\frac{1}{4}(1-4R_{2}+4R_{3}-4R_{4}+R_{5})+...
	\end{align}
	where $g(k)=\sum_{m=0}^{k}\frac{(-1)^{m}k!}{m!(k-m)!}\mathrm{Tr}(\rho^{m+1})$ and $R_{n}=\mathrm{Tr(\rho^{n})}$ with $n\geq 1$. We take the first ten terms of the series to get an estimate of the von Neumann entropy, as given below,
	\begin{equation}\label{aprx}
	 S(\rho)=(1-R_{2})+\frac{1}{2}(1-2R_{2}+R_{3})+...+\frac{1}{10}(1-10R_{2}+10R_{3}-10R_{4}+10R_{5}-10R_{6}+10R_{7}-10R_{8}+10R_{9}-10R_{10}+R_{11})
	\end{equation}

    We now estimate the von Neumann entropy of the transformed state $X_{d\otimes d}^{'}(iso)$ (using Eq.\ref{aprx} ). The table below (Table \ref{table4}), tabulates the ranges of $\lambda$ and  $\beta$ (for $ d=3,4,5,6 $), for which the transformed state will belong to $\mathfrak{AC}$.

	\begin{table}[!h]
	\begin{center}
		\begin{tabular}{||c  c  c   ||}
		\hline
		Dimension & State Parameter  & Channel parameter  \\ [0.5ex]
		\hline\hline
		$d=3$ & $-0.125 \leq \beta \leq 1$ & $0.765349 \leq \lambda \leq 1$ \\
		\hline
		$d=4$ &   $-0.0666667 \leq \beta \leq 1$ &$0.7806 \leq \lambda \leq 1$ \\
		\hline
		$d=5$ &   $-0.0416667 \leq \beta \leq 1$ &$0.831004 \leq \lambda \leq 1$ \\
		\hline
		$d=6$ &   $-0.0285714 \leq \beta \leq 1$ &$0.907309 \leq \lambda \leq 1$ \\
		\hline
		
	\end{tabular}
	\caption{Table depicting the range for $\mathfrak{AC}$ pertaining to the transformed isotropic state.} \label{table4}
	\end{center}
	\end{table}
	
	For $d=2,3,4...10$  with $-\frac{1}{d^2-1}\leq\beta\leq 1$ and $\leq\lambda\leq1$, there exist a range of parameters $\beta$ and $\lambda$ for which   the approximate von Neumann entropy is greater than $\log_{2} d$ (see Fig.\ref{graphiso}).
	
	\begin{figure}[h!]
\includegraphics[width=0.4\linewidth]{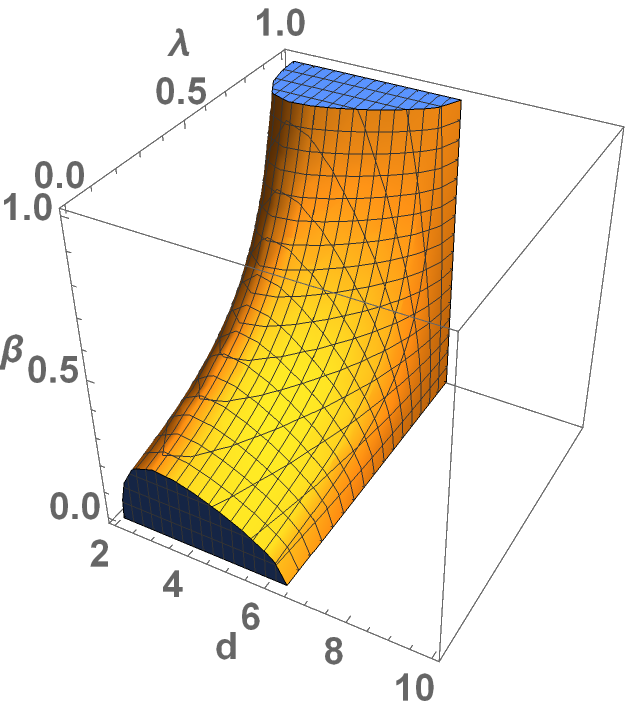}
\caption{Shaded region forms the range of parameters $\beta$ and $\lambda$ and the dimension $d$ for which the von Neumann entropy is $\geq \log_{2} d$  }
\label{graphiso}
\end{figure}

\section{Conditional Rényi Entropy and global unitaries}\label{Renyi}
 In this section, we introduce a class of states for which the conditional Rényi entropy is non-negative under the action of any global unitary operator. The Rényi entropy of order $\alpha$, of a quantum state $ \rho $ is defined as follows \cite{Renyi2},
		\begin{align}\label{renyi}
		S_{\alpha}(\rho)= \frac{1}{1-\alpha}\log_2 (\mathrm{Tr}\rho^{\alpha}),
		\end{align}
		where $\alpha=(0,1)\cup(1,\infty)$. The corresponding conditional Rényi entropy for a state $ \rho_{AB} $is defined as $ S_{\alpha}(A|B)=S_{\alpha}(\rho_{AB})-S_{\alpha}(\rho_{B})$. We note the class of states for which the conditional Rényi entropy is non-negative. We denote the class as \textbf{CRENN}, where \textbf{CRENN} $=\left\lbrace\rho_{cr}:S_{\alpha}(\rho_{cr})_{AB}- S_{\alpha}(\rho_{cr})_{
B}\geq 0\right\rbrace$.
\par Now, the absolute conditional Rényi entropy is defined as the collection of those states whose conditional Rényi entropy remains non-negative under any global unitary operation. We denote the absolute conditional Rényi entropy by $\mathbf{ACRENN}$, where
\begin{align*}
\mathbf{ACRENN}=\left\lbrace\rho_{cr}:S_{\alpha}(U \rho_{cr} U^{\dag})_{AB}- S_{\alpha}(U \rho_{cr} U^{\dag})_{B}\geq 0\right\rbrace
\end{align*}
 For $\alpha=2$, we refer to this class of states as $\mathbf{ACRE2NN}$.
\subsection{$ \mathbf{ACRENN} $}
In what follows below we characterize $ \mathbf{ACRENN} $. However, before doing this we note two observations related to Rényi entropy.

\textbf{Observation-$1$:}\\
\par Let $f$ be a function defined as $f:\mathbb{R}^{d}\rightarrow \mathbb{R}$. $f$ is said to be Schur-concave if $f(x) \leq f(y)$ whenever $x$ majorizes $y$. The Rényi entropy of order $\alpha$ is Schur-concave for all $\alpha$ \cite{Renyi1}. Hence, for  $\rho\succeq \rho^{'}$,
	\begin{align*}
	S_{\alpha}(\rho)\leq S_{\alpha}(\rho^{'})
	\end{align*}
\par \textbf{Observation-$2$:}\\
\par  For a $ d- $ dimensional system, the maximum value of its Rényi entropy is upper bounded by $\log_2 d$ \cite{Renyi1}.\\
The results are now given below,
\begin{theorem}\label{the1}
	A quantum state $\rho\in$ $\mathbf{ACRENN}$  if and only if  $S_{\alpha}(\rho)\geq \log_2 $ $d$
\end{theorem}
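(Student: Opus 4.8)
The plan is to mirror the known characterization of $\mathfrak{AC}$, replacing the von Neumann entropy by $S_{\alpha}$, and to lean on exactly two properties recorded above: (i) $S_{\alpha}$ is invariant under conjugation of the whole state by a global unitary, since $\mathrm{Tr}\big((U\rho U^{\dag})^{\alpha}\big)=\mathrm{Tr}(\rho^{\alpha})$ for every $\alpha\in(0,1)\cup(1,\infty)$; and (ii) Observation~$2$, which bounds the Rényi entropy of any $d$-dimensional system by $\log_{2}d$. Here $\rho=\rho_{AB}$ is a bipartite $d\otimes d$ state, and $d$ denotes the local dimension of the conditioning system $B$.

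For sufficiency, I would assume $S_{\alpha}(\rho)\geq\log_{2}d$ and fix an arbitrary global unitary $U$, writing $\rho'=U\rho U^{\dag}$ and $\rho'_{B}=\mathrm{Tr}_{A}\rho'$. By property (i), $S_{\alpha}(\rho'_{AB})=S_{\alpha}(\rho)\geq\log_{2}d$; by Observation~$2$ applied to the $d$-dimensional state $\rho'_{B}$, $S_{\alpha}(\rho'_{B})\leq\log_{2}d$. Subtracting gives $S_{\alpha}(A|B)_{\rho'}=S_{\alpha}(\rho'_{AB})-S_{\alpha}(\rho'_{B})\geq 0$, and since $U$ was arbitrary this means $\rho\in\mathbf{ACRENN}$.

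For necessity I would argue by contraposition: assuming $S_{\alpha}(\rho)<\log_{2}d$, I would exhibit one global unitary that makes the conditional Rényi entropy strictly negative. Take the eigendecomposition $\rho=\sum_{i}\lambda_{i}|e_{i}\rangle\langle e_{i}|$ and let $U$ be the unitary sending the orthonormal eigenbasis $\{|e_{i}\rangle\}$ onto the generalized Bell basis $\{(\mathbb{I}_{d}\otimes W_{mn})|\psi_{+}\rangle\}_{m,n=0}^{d-1}$, where $|\psi_{+}\rangle=\tfrac{1}{\sqrt d}\sum_{k}|kk\rangle$ and the $W_{mn}$ are the $d^{2}$ Weyl (shift--clock) unitaries; such a basis exists precisely because the two factors share the dimension $d$. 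Each of these basis vectors has $B$-marginal equal to $\mathbb{I}_{d}/d$, hence so does every convex combination, so $\rho'_{B}=\mathrm{Tr}_{A}(U\rho U^{\dag})=\mathbb{I}_{d}/d$ and $S_{\alpha}(\rho'_{B})=\log_{2}d$. Then $S_{\alpha}(A|B)_{\rho'}=S_{\alpha}(\rho)-\log_{2}d<0$, so $\rho\notin\mathbf{ACRENN}$, which is the desired contrapositive.

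The main obstacle — indeed the only nontrivial step — is this necessity direction: one must produce an explicit global unitary that drives the marginal to the maximally mixed state, and the Bell-basis rotation accomplishes exactly that. It is worth flagging that this construction, and consequently the threshold $\log_{2}d$ (rather than $\log_{2}d^{2}$), rests on $\dim\mathsf{H}_{A}=\dim\mathsf{H}_{B}=d$. One could alternatively phrase the optimality of this unitary through Observation~$1$ (Schur concavity): since $\mathbb{I}_{d}/d$ is majorized by every $d$-dimensional density matrix, no global unitary can raise $S_{\alpha}(\rho'_{B})$ above $\log_{2}d$, which simultaneously reproves sufficiency and confirms that the rotation onto the Bell basis is the worst case.
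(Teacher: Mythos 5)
Your proof is correct and follows essentially the same route as the paper: the paper's own proof is a one-line deferral to Theorem~4 of the cited work on $\mathfrak{AC}$ (unitary invariance of the global entropy, the bound $S_{\alpha}(\rho_{B})\leq\log_{2}d$ on the subsystem, and a global unitary rotating the eigenbasis onto a maximally entangled basis to force $\rho_{B}'=\mathbb{I}_{d}/d$), which is exactly the argument you spell out. The only difference is that your version is self-contained, explicitly exhibiting the Weyl--Bell basis for the necessity direction where the paper simply cites the earlier result.
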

\begin{proof}
A quantum state $\rho \in \mathfrak{AC}$ if and only if $S(\rho)\geq \log_{2} d$, which is based on the fact that the maximum von Neumann entropy of a sub-system is $\log_2 d$ \cite{Mahathi}.  We know that the maximum value of Rényi entropy of a sub-system $B$ is $\log_2 d$ \cite{Renyi1}. The proof of this theorem now follows from the Theorem-$4$ of the paper \cite{Mahathi}.
\end{proof}
\begin{theorem}
	If $\rho \succeq \rho^{'}$ and $\rho \in$ $\mathbf{ACRENN}$, then $\rho^{'}\in \mathbf{ACRENN}$
\end{theorem}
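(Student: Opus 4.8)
The plan is to combine the spectral characterization of $\mathbf{ACRENN}$ established in Theorem \ref{the1} with the Schur-concavity of the Rényi entropy recorded in Observation-$1$. Throughout, $\rho$ and $\rho^{'}$ are density operators on the same bipartite space $\mathsf{H_{AB}}$ with $\dim \mathsf{H_B}=d$, and, following the convention of the Majorization subsection, the relation $\rho \succeq \rho^{'}$ is read as $\lambda(\rho^{'})\prec \lambda(\rho)$, i.e. the spectrum of $\rho^{'}$ is majorized by that of $\rho$.

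First I would invoke Theorem \ref{the1} in the forward direction: since $\rho \in \mathbf{ACRENN}$, we have $S_{\alpha}(\rho)\geq \log_2 d$. This reduces the hypothesis to a single scalar inequality on the Rényi entropy of $\rho$. Next I would apply Observation-$1$: because $\lambda(\rho^{'})$ is majorized by $\lambda(\rho)$ and $S_{\alpha}$ is Schur-concave for every admissible $\alpha$, it follows that $S_{\alpha}(\rho^{'})\geq S_{\alpha}(\rho)$. Chaining the two inequalities gives $S_{\alpha}(\rho^{'})\geq S_{\alpha}(\rho)\geq \log_2 d$. Finally I would apply the converse direction of Theorem \ref{the1} to $\rho^{'}$: since $S_{\alpha}(\rho^{'})\geq \log_2 d$, we conclude $\rho^{'}\in \mathbf{ACRENN}$, which is the desired statement.

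There is no genuinely hard step here; the argument is the short chain of inequalities above. The only points demanding care are bookkeeping ones: (i) reading the symbol $\succeq$ in the direction consistent with the paper's definition, so that Schur-concavity is applied with the correct sign and yields $S_{\alpha}(\rho^{'})\geq S_{\alpha}(\rho)$ rather than the reverse; and (ii) observing that $\rho$ and $\rho^{'}$ share the same subsystem dimension $d$ (a majorization relation between operators presupposes equal-size spectra), so that the threshold $\log_2 d$ appearing in Theorem \ref{the1} is literally the same number for both states. Once these are noted, the proof is complete.
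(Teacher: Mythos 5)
Your proof is correct and follows exactly the paper's own argument: invoke Theorem \ref{the1} to get $S_{\alpha}(\rho)\geq \log_2 d$, use Schur-concavity (Observation-$1$) to obtain $S_{\alpha}(\rho^{'})\geq S_{\alpha}(\rho)$, and apply the converse direction of Theorem \ref{the1} to $\rho^{'}$. Your version is in fact slightly more careful than the paper's, which leaves the final appeal to the converse of Theorem \ref{the1} implicit.
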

\begin{proof}
	Let $\rho \in \mathbf{ACRENN}$. Then $S_{\alpha}(\rho)\geq \log_2 $ $d$. From Schur concavity of the Rényi entropy, it follows that
	\begin{align*}
	S_{\alpha}(\rho)\leq S_{\alpha}(\rho^{'})
	\end{align*}
Hence, the theorem.
\end{proof}
\begin{theorem}\label{Tr}
A quantum state $\rho \in $ $\mathbf{ACRENN}$ iff
\begin{equation*}
\mathrm{Tr}\rho^{\alpha}\begin{cases}
         \geq d^{1-\alpha} \quad &\text{if}~~  \alpha \in  $$ (0 , \; 1)$$ \\
         \leq d^{1-\alpha} \quad &\text{if} ~~ \alpha \in $$(1 , \; \infty)$$  \\
     \end{cases}
\end{equation*}
\end{theorem}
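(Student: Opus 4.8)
The plan is to start from Theorem \ref{the1}, which already tells us that $\rho \in \mathbf{ACRENN}$ if and only if $S_{\alpha}(\rho) \geq \log_2 d$, and simply unfold the definition of the Rényi entropy \eqref{renyi} into an inequality on $\mathrm{Tr}\rho^{\alpha}$. Writing $S_{\alpha}(\rho) = \frac{1}{1-\alpha}\log_2(\mathrm{Tr}\rho^{\alpha})$, the membership condition becomes
\begin{equation*}
\frac{1}{1-\alpha}\log_2(\mathrm{Tr}\rho^{\alpha}) \geq \log_2 d.
\end{equation*}
The key point — and the only real subtlety — is that multiplying both sides by $(1-\alpha)$ reverses the inequality precisely when $\alpha > 1$, since then $1-\alpha < 0$; for $\alpha \in (0,1)$ the factor $1-\alpha$ is positive and the direction is preserved. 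This is exactly what produces the two cases in the statement.

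Concretely, first I would treat the case $\alpha \in (0,1)$: multiply through by $1-\alpha > 0$ to get $\log_2(\mathrm{Tr}\rho^{\alpha}) \geq (1-\alpha)\log_2 d = \log_2 d^{1-\alpha}$, and then exponentiate (the base-$2$ exponential being monotone increasing) to obtain $\mathrm{Tr}\rho^{\alpha} \geq d^{1-\alpha}$. Next I would treat $\alpha \in (1,\infty)$: multiplying by $1-\alpha < 0$ flips the inequality, giving $\log_2(\mathrm{Tr}\rho^{\alpha}) \leq (1-\alpha)\log_2 d = \log_2 d^{1-\alpha}$, and exponentiating yields $\mathrm{Tr}\rho^{\alpha} \leq d^{1-\alpha}$. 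Each step is an equivalence, so the chain of implications runs in both directions, establishing the ``iff''.

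There is essentially no obstacle here beyond bookkeeping the sign of $1-\alpha$; the content is entirely inherited from Theorem \ref{the1}. The one thing worth a sentence of care is that $\mathrm{Tr}\rho^{\alpha} > 0$ for a density matrix (so that $\log_2$ is well-defined and the exponentiation is a genuine equivalence), and that $\log_2$ and $2^{(\cdot)}$ are strictly monotone, which is what lets me pass freely between the logarithmic and the power form. I would also note as a sanity check that at the formal boundary $\alpha \to 1$ both conditions degenerate to $\mathrm{Tr}\rho = 1 = d^0$, consistent with the von Neumann case $S(\rho) \geq \log_2 d$ of \cite{Mahathi}. So the proof is a short three-line computation: restate Theorem \ref{the1}, substitute the definition of $S_{\alpha}$, and split on the sign of $1-\alpha$.
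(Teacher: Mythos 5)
Your proposal is correct and follows essentially the same route as the paper: invoke Theorem \ref{the1}, substitute the definition of $S_{\alpha}$, and split into the two cases according to the sign of $1-\alpha$, with the inequality reversing for $\alpha>1$. The extra remarks on positivity of $\mathrm{Tr}\rho^{\alpha}$ and strict monotonicity of $\log_2$ are sound but add nothing beyond what the paper's two-case computation already establishes.
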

\begin{proof}

\par \textbf{Case-1: $\alpha\in(0 , \; 1)$}

Since $\alpha\in(0 , \; 1)$, $1-\alpha > 0$. We know that $\rho \in $ $\mathbf{ACRENN}$, iff $S_{\alpha}(\rho)\geq \log_2 d$.
\begin{align}
	S_{\alpha}(\rho)\geq &\notag\log_2 d\\
	\iff& \notag \frac{1}{1-\alpha}\log_2 (\mathrm{Tr}\rho^{\alpha})\geq \log_2 d\\
	\iff& \notag \mathrm{Tr}\rho^{\alpha} \geq  d^{1-\alpha}
\end{align}
\par \textbf{Case-2: $\alpha\in(1 , \; \infty)$}

Since $\alpha\in(1 , \; \infty)$, $1-\alpha < 0$. We know that $\rho\in $ $\mathbf{ACRENN}$, iff $S_{\alpha}(\rho)\geq \log_2 d$.
\begin{align}
	S_{\alpha}(\rho)\geq &\notag\log_2 d\\
	\iff& \notag \frac{1}{1-\alpha}\log_2 (\mathrm{Tr}\rho^{\alpha})\geq \log_2 d\\
	\iff& \notag \mathrm{Tr}\rho^{\alpha} \leq  d^{1-\alpha}
\end{align}
\end{proof}

\par Note that for $\alpha=2$, theorem-\ref{Tr} can be written as $\mathrm{Tr}\rho^{2}\leq \frac{1}{d}$, which is nothing but the purity of a state. So, a quantum state $\rho\in$ $\mathbf{ACRE2NN}$ iff its purity $\leq \frac{1}{d}$.

\subsection{Characterization of ACRE2NN in the Bloch-Fano representation}
\par In this section, we characterize absolute conditional Rényi 2 entropy non-negative ($\mathbf{ACRE2NN}$) class in terms of the Bloch-Fano decomposition of a density matrix.
\begin{theorem}\label{the2}
 $ \rho_{d \otimes d} \in \mathfrak{B}(\mathsf{H_{AB}}) $ belongs to $\mathbf{ACRE2NN}$ if and only if
 \begin{align*}
 	\|T\|^{2}\leq  \frac{d^{2}(d-1)-2d(\|\vec{a}\|^{2}+\|\vec{b}\|^{2})}{4}
 \end{align*}
where $T $ is the correlation matrix and $ \vec{a} $, $ \vec{b} $ are the local Bloch vectors.
\end{theorem}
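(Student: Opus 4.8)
The plan is to reduce the claim to the purity characterization already established. By the remark following Theorem~\ref{Tr}, a state $\rho_{d\otimes d}$ belongs to $\mathbf{ACRE2NN}$ if and only if $\mathrm{Tr}\,\rho_{d\otimes d}^{2}\le\frac1d$. So the whole proof amounts to rewriting this purity inequality in terms of the Bloch--Fano data. First I would fix the Bloch--Fano decomposition of a $d\otimes d$ density matrix,
\[
\rho_{d\otimes d}=\frac{1}{d^{2}}\Big(\mathbb{I}_d\otimes\mathbb{I}_d+\sum_{i=1}^{d^{2}-1}a_i\,\lambda_i\otimes\mathbb{I}_d+\sum_{j=1}^{d^{2}-1}b_j\,\mathbb{I}_d\otimes\lambda_j+\sum_{i,j=1}^{d^{2}-1}T_{ij}\,\lambda_i\otimes\lambda_j\Big),
\]
where $\{\lambda_i\}$ are the generalized Gell--Mann matrices, traceless and normalized by $\mathrm{Tr}(\lambda_i\lambda_j)=2\delta_{ij}$, $\vec a=(a_i)$ and $\vec b=(b_j)$ are the local Bloch vectors of the reduced states, and $T=(T_{ij})$ is the correlation matrix, with $\|\vec a\|^{2}=\sum_i a_i^{2}$, $\|\vec b\|^{2}=\sum_j b_j^{2}$ and $\|T\|^{2}=\sum_{i,j}T_{ij}^{2}$.

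Next I would compute $\mathrm{Tr}\,\rho_{d\otimes d}^{2}$ by squaring this expression and applying the orthogonality relations $\mathrm{Tr}\,\lambda_i=0$ and $\mathrm{Tr}(\lambda_i\lambda_j)=2\delta_{ij}$. Every cross term between blocks of different type vanishes (each such term contains an uncoupled factor $\mathrm{Tr}\,\lambda_k=0$), while $\mathrm{Tr}[(\mathbb{I}_d\otimes\mathbb{I}_d)^{2}]=d^{2}$, $\mathrm{Tr}[(\lambda_i\otimes\mathbb{I}_d)(\lambda_j\otimes\mathbb{I}_d)]=2d\,\delta_{ij}$, and $\mathrm{Tr}[(\lambda_i\otimes\lambda_j)(\lambda_k\otimes\lambda_l)]=4\,\delta_{ik}\delta_{jl}$. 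This yields
\[
\mathrm{Tr}\,\rho_{d\otimes d}^{2}=\frac{1}{d^{4}}\Big(d^{2}+2d\,\|\vec a\|^{2}+2d\,\|\vec b\|^{2}+4\,\|T\|^{2}\Big).
\]
Finally, I would impose $\mathrm{Tr}\,\rho_{d\otimes d}^{2}\le\frac1d$, clear the denominator and rearrange: this is equivalent to $4\|T\|^{2}\le d^{3}-d^{2}-2d(\|\vec a\|^{2}+\|\vec b\|^{2})=d^{2}(d-1)-2d(\|\vec a\|^{2}+\|\vec b\|^{2})$, i.e. to $\|T\|^{2}\le\frac{d^{2}(d-1)-2d(\|\vec a\|^{2}+\|\vec b\|^{2})}{4}$, which is precisely the claimed bound. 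Since each step in the chain is a genuine equivalence, both directions of the ``if and only if'' come for free.

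The computation is routine; the only point requiring care is the bookkeeping of conventions. The prefactor $1/d^{2}$ in the decomposition and the normalization $\mathrm{Tr}(\lambda_i\lambda_j)=2\delta_{ij}$ must be used consistently, since a different choice would rescale $\vec a,\vec b,T$ and change the numerical coefficients $2d$ and $4$ in the final inequality. I would also remark that the decomposition above represents an arbitrary Hermitian unit-trace operator, so no generality is lost, and that positivity of $\rho_{d\otimes d}$ plays no role in the algebraic equivalence itself (it is needed only for $\vec a,\vec b,T$ to be the parameters of a legitimate state). I expect no real obstacle here beyond this conventions check.
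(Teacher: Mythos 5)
Your proposal is correct and follows essentially the same route as the paper: the paper likewise inserts the Bloch--Fano decomposition with the normalization $\mathrm{Tr}(\sigma_m\sigma_n)=2\delta_{mn}$, computes $\mathrm{Tr}\,\rho_{d\otimes d}^{2}=\frac{1}{d^{4}}\left(d^{2}+2d\|\vec a\|^{2}+2d\|\vec b\|^{2}+4\|T\|^{2}\right)$, and rearranges the condition $S_{2}(\rho)\geq\log_{2}d$ (equivalently, your purity bound $\mathrm{Tr}\,\rho^{2}\leq 1/d$ from Theorem~\ref{Tr}) into the stated inequality. The only cosmetic difference is that you invoke the purity remark directly while the paper writes the chain through $-\log_{2}(\mathrm{Tr}\,\rho^{2})\geq\log_{2}d$, which is the same equivalence.
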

\begin{proof}
  $\rho_{d\otimes d}$  can be expressed as below \cite{Bloch},

\begin{widetext}
 \begin{align}\label{r1}
 	\rho_{d\otimes d}  =\frac{1}{d^{2}}\left(\mathbb{I}_d\otimes\mathbb{I}_d+\sum_{m=1}^{d^2-1} a_{m}(\sigma_{m}\otimes \mathbb{I}_d)+\sum_{n=1}^{d^2-1} b_{n}(\mathbb{I}_d\otimes \sigma_{n})+\sum_{m=1}^{d^2-1} \sum_{n=1}^{d^2-1}t_{mn}(\sigma_{m}\otimes \sigma_{n}) \right)
 \end{align}
\end{widetext}

where $\sigma_{m}$ are the generalized Pauli matrices with the properties $\mathrm{Tr}(\sigma_{m}\sigma_{n})=2\delta_{mn}$ and $\mathrm{Tr}(\sigma_{m})=0$ .

\begin{align}\label{r2}
\mathrm{Tr}\rho_{d\otimes d}^{2} =\frac{1}{d^{4}}\left(d^{2}+2d\sum_{m=1}^{d^2-1} a_{m}^{2}+2d \sum_{n=1}^{d^2-1} b_{n}^{2}+4 \sum_{m=1}^{d^2-1} \sum_{n=1}^{d^2-1}t_{mn}^{2} \right)
\end{align}
where Eq.\ref{r2} follows from the fact that $\mathrm{Tr}(\sigma_{m}\sigma_{n})=2\delta_{mn}$ and $\mathrm{Tr}(\sigma_{m})=0$. The expression can be suitably modified as,
\begin{align}\label{r3}
	\mathrm{Tr}\rho_{d\otimes d}^{2}= \frac{1}{d^{4}}\left(d^{2}+2d \|\vec{a}\|^{2}+2d \|\vec{b}\|^{2}+4\|T\|^{2} \right)
\end{align}
where $ \|T\|^{2}=\mathrm{Tr}(T^{\dag}T)$ and $\|\vec{a}\|$, $\|\vec{b}\|$ stand for the Euclidean norm.

If we denote $2d \|\vec{a}\|^{2}+2d \|\vec{b}\|^{2}+4\|T\|^{2}=R $, then Eq.\ref{r3} can be written as,
\begin{align}
	\mathrm{Tr}\rho_{d\otimes d}^{2}=\frac{d^2+R}{d^{4}}
\end{align}
From Theorem \ref{the1}, we know that a state will belong to $\mathbf{ACRENN}$, if and only if its Rényi entropy exceeds $ \log_2 d$. Therefore,
\begin{align}
	S_{2}(\rho_{d \otimes d})=&\notag-\log_2 (\mathrm{Tr\rho_{d \otimes d}^{2}})\\
	\iff& S_{2}(\rho_{d \otimes d}) =\notag-\log_2 \left(\frac{d^2+R}{d^{4}}\right)\geq \log_2 d\\
	\iff& \notag \log_{2}\left(\frac{d^{4}}{d^2+R}\right)\geq \log_2 d\\
	\iff& \notag R \leq  d^{3}-d^{2}\\
	\iff& \|T\|^{2}\leq  \frac{d^{2}(d-1)-2d(\|\vec{a}\|^{2}+\|\vec{b}\|^{2})}{4}
\end{align}
\end{proof}

  Weyl states are special cases of the general state in $d \otimes d$ dimensions with vanishing Bloch vectors $a_{m}=b_{n}=0$, $\forall$ $m, n$.
  \begin{corollary}
    For a $d \otimes d$ Weyl state, a state $ \in $ $\mathbf{ACRE2NN}$ if and only if
  \begin{align*}
  	\|T\|^{2}\leq  \frac{d^{2}(d-1)}{4}
  \end{align*}
  \end{corollary}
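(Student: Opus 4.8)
The plan is to obtain this as an immediate specialization of Theorem \ref{the2}. First I would recall that by definition a $d\otimes d$ Weyl state has vanishing local Bloch vectors, i.e. $a_m = b_n = 0$ for all $m,n$, so that in the Bloch--Fano decomposition \eqref{r1} only the identity term and the correlation term $\sum_{m,n} t_{mn}(\sigma_m\otimes\sigma_n)$ survive. Consequently $\|\vec a\|^2 = \|\vec b\|^2 = 0$.

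Next I would invoke Theorem \ref{the2}, which states that $\rho_{d\otimes d}\in\mathbf{ACRE2NN}$ if and only if
\begin{align*}
\|T\|^2 \leq \frac{d^2(d-1) - 2d\bigl(\|\vec a\|^2 + \|\vec b\|^2\bigr)}{4}.
\end{align*}
Substituting $\|\vec a\|^2 = \|\vec b\|^2 = 0$ into the right-hand side collapses it to $\frac{d^2(d-1)}{4}$, which is precisely the claimed criterion. No separate direction needs to be argued, since the ``if and only if'' is inherited verbatim from Theorem \ref{the2} once the substitution is made.

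There is essentially no obstacle here: the only thing to be careful about is confirming that the defining property of a Weyl state really is the simultaneous vanishing of both single-party Bloch vectors (so that the substitution is legitimate), which is exactly the statement made in the sentence preceding the corollary. Hence the corollary follows directly.
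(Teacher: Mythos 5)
Your proposal is correct and coincides with the paper's own (implicit) argument: the paper likewise presents the corollary as an immediate specialization of Theorem \ref{the2}, obtained by setting $\|\vec{a}\|^{2}=\|\vec{b}\|^{2}=0$ for a Weyl state, with the biconditional inherited directly from the theorem. Nothing further is required.
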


\section{Characterization of Absolute properties for the marginals in a Tripartite System} \label{tripartite}
Partial trace is a pertinent quantum operation specifically in the case when one needs to study the marginals of a composite system. In this section, we study the reduced subsystems of a three-qudit system in the context of their membership in $\mathbf{ACRE2NN}$.
\subsection{Characterization in the Bloch-Fano representation}

 \par Let $\rho_{d\otimes d\otimes d} \in \mathfrak{B}(\mathsf{H_{A}} \otimes \mathsf{H_{B}} \otimes \mathsf{H_{C}}) $ be a tripartite quantum state, which can be expressed as follows \cite{Ming Li}
 \begin{align}
 \rho_{d\otimes d\otimes d}=\nonumber&\frac{1}{d^{3}} \mathbb{I}_d \otimes \mathbb{I}_d\otimes \mathbb{I}_d+\frac{1}{2d^{2}}\left(\sum t_{i}^{1}\sigma_{i}\otimes \mathbb{I}_d \otimes \mathbb{I}_d +\sum t_{j}^{2}\mathbb{I}_d\otimes\sigma_{j}\otimes \mathbb{I}_d+\sum t_{k}^{3}\mathbb{I}_d\otimes \mathbb{I}_d\otimes \sigma_{k}\right)+\\ & \frac{1}{4d}\left(\sum t_{ij}^{12}\sigma_{i}\otimes\sigma_{j}\otimes \mathbb{I}_d+\sum t_{ik}^{13}\sigma_{i}\otimes\mathbb{I}_d\otimes \sigma_{k}+\sum t_{jk}^{23}\mathbb{I}_d\otimes\sigma_{j}\otimes \sigma_{k}\right)+\frac{1}{8}\sum t_{ijk}^{123}\sigma_{i}\otimes\sigma_{j}\otimes\sigma_{k}
 \end{align}
 where $t_{i}^{1}=\mathrm{Tr}(\rho\sigma_{i}\otimes\mathbb{I}_d\otimes \mathbb{I}_d)$, $t_{j}^{2}=\mathrm{Tr}(\rho \mathbb{I}_d\otimes\sigma_{j} \otimes\mathbb{I}_d)$, $t_{k}^{3}=\mathrm{Tr}(\rho \mathbb{I}_d\otimes \mathbb{I}_d \otimes\sigma_{k})$, $t_{ij}^{12}=\mathrm{Tr}(\rho\sigma_{i}\otimes\sigma_{j}\otimes \mathbb{I}_d)$, $t_{ik}^{13}=\mathrm{Tr}(\rho\sigma_{i}\otimes \mathbb{I}_d\otimes\sigma_{k})$, $t_{jk}^{23}=\mathrm{Tr}(\rho \mathbb{I}_d\otimes \sigma_{j}\otimes\sigma_{k} )$, $t_{ijk}^{123}=\mathrm{Tr}(\rho\sigma_{i}\otimes\sigma_{j}\otimes\sigma_{k})$ with the property $\mathrm{Tr}(\sigma_{i})=0$, $\mathrm{Tr}(\sigma_{i}\sigma_{j})=2\delta_{ij}$. Here, $T_{x}$, $T_{xy}$, $T_{123}$ are the vectors with entries $t_{i}^{x}$, $t_{ij}^{xy}$ and $t_{ijk}^{123}$ respectively and satisfying the condition $1\leq x<y<z\leq 3$.
 \par Removing the first qubit, the reduced state is given by:
 \begin{align}\label{rhobc}
  \rho_{BC}=\frac{1}{d^2}\mathbb{I}_d\otimes \mathbb{I}_d+\frac{1}{2d}\left(\sum t_{j}^{2}(\sigma_{j}\otimes \mathbb{I}_d)+\sum t_{k}^{3}(\mathbb{I}_d \otimes \sigma_{k})\right)+\frac{1}{4}\sum t_{jk}^{23}(\sigma_{j}\otimes\sigma_{k})
 \end{align}
 Now, \begin{align}\label{1}
 \mathrm{Tr}\rho_{BC}^{2}&=\nonumber \frac{1}{d^2}+\frac{1}{2d}\left(\sum \sum t_{j}^{2} t_{j}^{2}+\sum \sum t_{k}^{3} t_{k}^{3}\right)+\frac{1}{4}\sum\sum t_{jk}^{23}t_{jk}^{23}\\
 &=\frac{1}{d^2}+\frac{1}{2d}\left(\|T_{2}\|^{2}+\|T_{3}\|^{2}\right)+\frac{1}{4}\|T_{23}\|^{2}
 \end{align}
Where $\sum t_{j}^{2}t_{j}^{2}=\|T_{2}\|^{2}$,  $\sum t_{k}^{3}t_{k}^{3}=\|T_{3}\|^{2}$ , $\sum t_{jk}^{23}t_{jk}^{23}=\|T_{23}\|^{2}$ and Eq.$\ref{1}$ follows from the fact that $\mathrm{Tr}(\sigma_{i})=0$, $\mathrm{Tr}(\sigma_{i}\sigma_{j})=2\delta_{ij}$.\\
\par We know that a state will belong to $\mathbf{ACRE2NN}$ iff $S_{2}(\rho_{BC})\geq \log_{2} d$.\\
\par Now, let us suppose that $m_{1}=\|T_{2}\|^{2}+\|T_{3}\|^{2}$. Then Eq.\ref{1} can we written as
\begin{align}
\mathrm{Tr}\rho_{BC}^{2}=\frac{1}{d^{2}}+\frac{1}{2d}m_{1}+\frac{1}{4}\|T_{23}\|^{2}
\end{align}
\begin{align}
	S_{2}(\rho_{BC})=&\notag-\log_{2}(\mathrm{Tr\rho_{BC}^{2}})\\
	\iff& S_{2}(\rho_{AB}) =\notag-\log_{2}\left(\frac{1}{d^2}+\frac{1}{2d}m_{1}+\frac{1}{4}\|T_{23}\|^{2}\right)\geq \log_{2} d\\
	\iff& \notag -\log_{2}\left(\frac{4+2dm_{1}+d^{2}\|T_{23}\|^{2}}{4d^2}\right)\geq \log_{2} d\\
	\iff& \notag \log_{2}\left(\frac{4d^{2}}{4+2dm_{1}+d^{2}\|T_{23}\|^{2}}\right)\geq \log_{2} d\\
	\iff& \notag 2m_{1}+d\|T_{23}\|^{2}\leq \frac{4}{d}(d-1)\\
	\iff& \|T_{23}\|^{2}\leq \frac{4(d-1)-2m_{1}d}{d^{2}}
\end{align}
\par Similarly, removing the second qubit, the reduced state is given as follows:
 \begin{align}\label{rhoac}
  \rho_{AC}=\frac{1}{d^2}\mathbb{I}_d\otimes \mathbb{I}_d+\frac{1}{2d}\left(\sum t_{i}^{1}(\sigma_{i}\otimes \mathbb{I}_d)+\sum t_{k}^{3}(\mathbb{I}_d \otimes \sigma_{k})\right)+\frac{1}{4}\sum t_{ik}^{13}(\sigma_{i}\otimes\sigma_{k})
 \end{align}
 \par And the condition for ACRE2NN is given by
 \begin{align}
 \|T_{13}\|^{2}\leq \frac{4(d-1)-2m_{2}d}{d^{2}}
 \end{align}
where $\|T_{13}\|^{2}=\sum t_{ik}^{13}t_{ik}^{13} $ and$~ m_{2}=\|T_{1}\|^{2}+\|T_{3}\|^{2}$.
 \par Similarly, removing the third qubit, the reduced state is given as follows:
 \begin{align}\label{rhoab}
  \rho_{AB}=\frac{1}{d^2}\mathbb{I}_d\otimes \mathbb{I}_d+\frac{1}{2d}\left(\sum t_{i}^{1}(\sigma_{i}\otimes \mathbb{I}_d)+\sum t_{j}^{2}(\mathbb{I}_d \otimes \sigma_{k})\right)+\frac{1}{4}\sum t_{ij}^{12}(\sigma_{i}\otimes\sigma_{j})
 \end{align}
 \par And the condition for $\mathbf{ACRE2NN}$ is give by
 \begin{align}
 \|T_{12}\|^{2}\leq \frac{4(d-1)-2m_{3}d}{d^{2}}
 \end{align}
where $\|T_{12}\|^{2}=\sum t_{ij}^{12}t_{ij}^{12} $ and$~~ m_{3}=\|T_{1}\|^{2}+\|T_{2}\|^{2}$.

\subsection{Pure states in three qubits}

In this section, we study the reduced subsystems of a three-qubit pure state in the context of their membership in $\mathbf{ACRE2NN}$.

In \cite{Acin}, a pure tripartite state is given as,
\begin{equation}\label{pure}
	|\psi_{3}\rangle=x_0 |000\rangle+x_1 e^{\imath \theta}|100\rangle+x_2 |101\rangle+x_3 |110\rangle+x_4|111\rangle
\end{equation}
where $x_i$$\geq$$0,$ $0$$\leq$$\theta$$\leq$$\pi$ and $\sum_{i=0}^4 x_i^2$$=$$1$.\\
Removing the first qubit, the reduced state is given by:
\begin{equation}\label{red1}
	\left(\begin{array}{cccc}
		x_0^2+x_1^2&e^{\imath \theta}x_1x_2&e^{\imath \theta}x_1x_3&e^{\imath \theta}x_1x_4\\
		e^{-\imath \theta}x_1x_2&x_2^2&x_2x_3&x_2x_4\\
		e^{-\imath \theta}x_1x_3&x_2x_3&x_3^2&x_3x_4\\
		e^{-\imath \theta}x_1x_4&x_2x_4&x_3x_4&x_4^2\\
	\end{array} \right)
\end{equation}
Eigenvalues of the reduced state (Eq.(\ref{red1})) are,  $0,0,\frac{1}{2}(1\pm\sqrt{S_1}),$  where $S_1$ is given by $1+4(x_2^2+x_3^2-x_4^2+x_1^2x_4^2+x_3^2(-1+x_1^2+2x_4^2)
+x_2^2(-1+x_1^2+2x_3^2+2x_4^2))$.

Removing the second qubit, the reduced state is given by:
\begin{equation}\label{red2}
	\left(\begin{array}{cccc}
		x_0^2&0&e^{-\imath \theta}x_0 x_1&x_0x_2\\
		0&0&0&0\\
		e^{\imath \theta}x_0x_1&0&x_1^2+x_3^2&e^{\imath \theta}x_1x_2+x_3x_4\\
		x_0x_2&0&e^{-\imath \theta}x_1x_2+x_3x_4&x_2^2+x_4^2\\
	\end{array} \right)
\end{equation}
The eigenvalues are $0,0,\frac{1}{2}(1\pm\sqrt{S_2}),$  where $S_2$ is given by
$ 1-4(x_0^2x_3^2+x_2^2x_3^2-2\cos(\theta)x_1x_2x_3x_4+x_0^2x_4^2+x_1^2x_4^2)$.

Removing the third qubit, the reduced state is given by:
\begin{equation}\label{red3}
	\left(\begin{array}{cccc}
		x_0^2&0&e^{-\imath \theta}x_0x_1&x_0x_3\\
		0&0&0&0\\
		e^{\imath \theta}x_0x_1&0&x_1^2+x_2^2&e^{\imath \theta}x_1x_3+x_2x_4\\
		x_0x_3&0&e^{-\imath \theta}x_1x_3+x_2x_4&x_3^2+x_4^2\\
	\end{array} \right)
\end{equation}
The eigenvalues are $0,0,\frac{1}{2}(1\pm\sqrt{S_3}),$  where $S_3$ is given by
$  1-4(x_0^2x_2^2+x_2^2x_3^2-2\cos(\theta)x_1x_2x_3x_4+x_0^2x_4^2+x_1^2x_4^2)$

The reduced state of the pure state (Eq.\ref{pure}) with respect to first, second and third qubit are given by Eq.\ref{red1}, Eq.\ref{red2} and Eq.\ref{red3} respectively. We know from the theorem \ref{Tr}, for membership in $\mathbf{ACRE2NN}$, $\mathrm{Tr}\rho^{2}\leq \frac{1}{2}$.
\par For the reduced state when first qubit is traced out (\ref{red1}), $\mathrm{Tr}\rho^{2}= x_{0}^{4}+2x_{0}^{2}x_{1}^{2}+(1-x_{0}^{2})^{2}$. The reduced state $\in \mathbf{ACRE2NN}$, if
\begin{align}
x_{0}^{4}+2x_{0}^{2}x_{1}^{2}+(1-x_{0}^{2})^{2}\leq \frac{1}{2}
\end{align}
\par  For the reduced state when second qubit is traced out (\ref{red2}), $\mathrm{Tr}\rho^{2}= x_{0}^{4}+x_{1}^{4}+x_{2}^{4}+x_{3}^{4}+x_{4}^{4}+2x_{0}^{2}(x_{1}^{2}+x_{2}^{2})+2x_{1}^{2}(x_{2}^{2}+x_{3}^{2})+2x_{4}^{2}(x_{3}^{2}+x_{2}^{2})+4x_{1}x_{2}x_{3}x_{4}\cos(\theta)$. The reduced state $\in \mathbf{ACRE2NN}$, if
\begin{align}
x_{0}^{4}+x_{1}^{4}+x_{2}^{4}+x_{3}^{4}+x_{4}^{4}+2x_{0}^{2}(x_{1}^{2}+x_{2}^{2})+2x_{1}^{2}(x_{2}^{2}+x_{3}^{2})+2x_{4}^{2}(x_{3}^{2}+x_{2}^{2})+4x_{1}x_{2}x_{3}x_{4}\cos(\theta)\leq \frac{1}{2}
\end{align}
\par  For the reduced state when third qubit is traced out (\ref{red3}), $\mathrm{Tr}\rho^{2}= x_{0}^{4}+x_{1}^{4}+x_{2}^{4}+x_{3}^{4}+x_{4}^{4}+2x_{0}^{2}(x_{1}^{2}+x_{3}^{2})+2x_{1}^{2}(x_{2}^{2}+x_{3}^{2})+2x_{4}^{2}(x_{3}^{2}+x_{2}^{2})+4x_{1}x_{2}x_{3}x_{4}\cos(\theta)$. The reduced state $\in \mathbf{ACRE2NN}$, if
\begin{align}
x_{0}^{4}+x_{1}^{4}+x_{2}^{4}+x_{3}^{4}+x_{4}^{4}+2x_{0}^{2}(x_{1}^{2}+x_{3}^{2})+2x_{1}^{2}(x_{2}^{2}+x_{3}^{2})+2x_{4}^{2}(x_{3}^{2}+x_{2}^{2})+4x_{1}x_{2}x_{3}x_{4}\cos(\theta)\leq \frac{1}{2}
\end{align}

\subsection{An example for a mixed state in three qubits}
	
	 Consider a mixture of GHZ \cite{GHZ} and W states \cite{W} in three qubits,
\begin{eqnarray}\label{red4}
	\varrho_{2\otimes 2\otimes 2}&=&p |\phi_1\rangle\langle \phi_1| +(1-p) |\phi_2\rangle\langle \phi_2|,\,\textmd{where},\nonumber\\
	|\phi_1\rangle &=&\frac{1}{\sqrt{2}}(|000\rangle+|111\rangle)\nonumber\\
	|\phi_2\rangle &=&\frac{1}{\sqrt{3}}(|001\rangle+|010\rangle+|100\rangle)
\end{eqnarray}
Reduced state by eliminating any one of the three qubits is given by:
\begin{equation}\label{red5}
	\left(\begin{array}{cccc}
		\frac{2+p}{6}&0&0&0\\
		0&\frac{1-p}{3}&\frac{1-p}{3}&0\\
		0&\frac{1-p}{3}&\frac{1-p}{3}&0\\
		0&0&0&\frac{p}{2}\\
	\end{array} \right)
\end{equation}
Eigen values of the reduced matrix are,  $0,\frac{2}{3}(1-p),\frac{p}{2},\frac{2+p}{6}.$ Therefore, the reduced state $\in\mathbf{ACRE2NN}$ for $p\in \left[0.07,1\right]$.

\section{Swapping Network Protocol}\label{swapp}

In this section, we lay a prescription wherein a state which has moved to the absolute regime can be transformed into a state in the non-absolute regime.

The protocol used here for retrieving resources rely  primarily on an entanglement swapping network that entangles two never-interacting pairs of particles using appropriate measurements \cite{entswap1,entswap2}. It consists of two phases: Preparatory Phase and Measurement Phase.

\textbf{ Preparatory Phase:}
Let us suppose that there are three parties Alice , Bob and Charlie and two sources $S_1$ and $S_2$ (see Fig.\ref{fig:swap}). Let each of the two sources $S_1$ and $S_2$ generate an entangled state: $\rho_{AB}$ and $\rho_{BC}$ respectively. Entangled state $\rho_{AB}$ is shared between Alice and Bob whereas  $\rho_{BC}$ is shared between Bob and Charlie. So, Bob receives two particles from the two sources $S_1$ and $S_2$. Then Bob performs full Bell-basis measurement on the joint state of the particles received from $S_1$ and $S_2.$ After performing the measurement, he broadcasts the  results to Alice and Charlie. Let he send bit string $(00),(01),(10),(11)$ when the joint state of the two particles gets projected along $|\phi^{+}\rangle\langle \phi^{+}|,|\phi^{-}\rangle\langle \phi^{-}|,|\psi^{+}\rangle\langle \psi^{+}|$ and $|\psi^{-}\rangle\langle \psi^{-}|$ respectively. Based on Bob's result $(i,j)$, Alice and Charlie now share one of the four conditional states $\rho^{(ij)}.$

\textbf{ Measurement Phase:} In this stage, Alice and Charlie perform suitable global unitary operations on 
 $\rho^{(ij)}$ and then local projective measurements on their respective subsystems. Resulting correlations are now tested to see whether at least one of these conditional states ($\rho^{(ij)}$) $ \in $ $\mathfrak{AC}$ or not. \\
A pure entangled state is first passed through a noisy channel. Let the resulting state be a member of $\mathfrak{AC}.$  Let two copies of such state be then used in the protocol.  In case at least one of the four conditional states $\rho^{(ij)}$ is not a member of $\mathfrak{AC},$ then the protocol is said to have succeeded probabilistically in retrieving the nonclassical feature of the pure entangled state.
The procedure of our protocol is described in Figure \ref{fig:swap}.
\begin{figure}[h]
	\centering
	\includegraphics[width=0.5\linewidth]{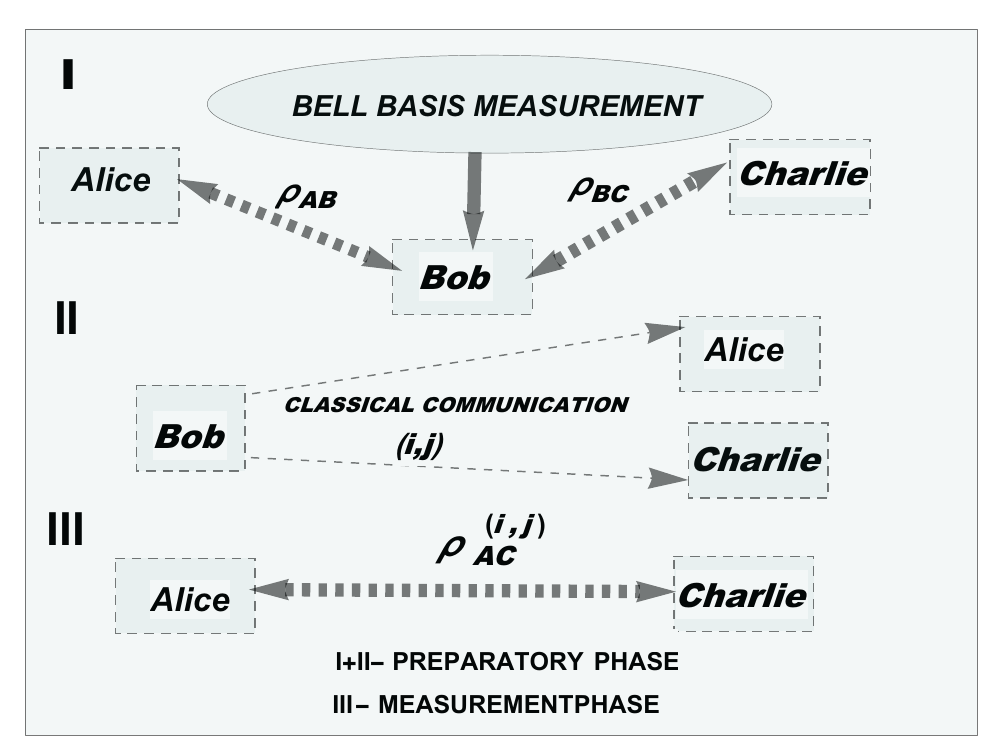}
	\caption{The Swapping Network}
	\label{fig:swap}
\end{figure}
\subsection{Werner state}
The state obtained after passing the two qubit pure entangled state (Eq.\ref{ps}) through the global depolarizing channel is given by Eq.\ref{gdc}. $2$ copies of $\rho_{2\otimes 2}^{'}(G)$ states with parameters ($\theta_{1},p_{1}$) and ($\theta_{2},p_{2}$) are used in the protocol. At the end of preparation phase, four possible conditional states may be generated.  Entropy for each of $ \rho^{(00)}$ and $ \rho^{(01)}$ is same. For $p_{2}=0.705882$, there exists a range of parameters, $0\leq p_1 \leq 1$ and $\theta_{1}, \theta_{2}\in(0,\frac{\pi}{2}),$ for which entropy of each of the two copies used in swapping network have entropy greater than 1, whereas entropy of the of $ \rho^{(00)}$ or $ \rho^{(01)}$ is $\leq 1$ (see Fig.\ref{D12}).

Entropy for each of $ \rho^{(10)}$ and $ \rho^{(11)}$ is same. For $p_2=0.705882$, there exists a range of parameters, $0\leq p_1 \leq 1$ and $\theta_{1}, \theta_{2}\in(0,\frac{\pi}{2}),$ for which entropy of each of the two copies used in swapping network have entropy greater than 1, whereas entropy of $ \rho^{(10)}$ or $ \rho^{(11)}$  is $\leq 1$ (see Fig.\ref{D34}).
\begin{figure}[!ht]
	\centering
	\begin{subfigure}[b]{0.3\textwidth}
		\centering
		\includegraphics[width=\textwidth]{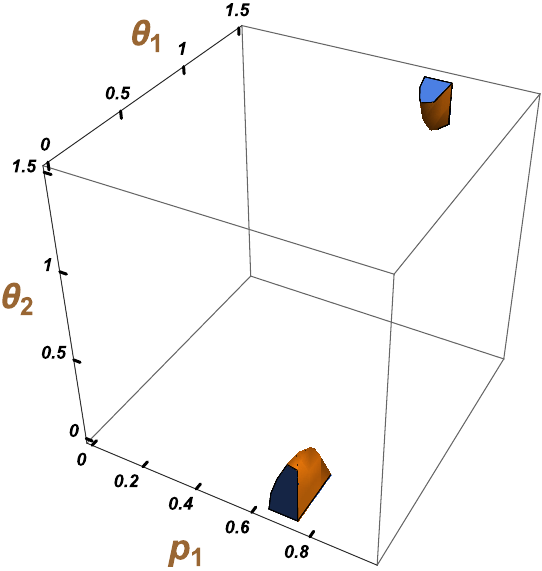}
		\caption{Shaded region gives the range of damping parameter $p_1$ and state parameters $\theta_1,\theta_2$ for which retrieval of nonclassical feature is possible probabilistically.  }
		\label{D12}
	\end{subfigure}
	\hfill
	\begin{subfigure}[b]{0.3\textwidth}
		\centering
		\includegraphics[width=\textwidth]{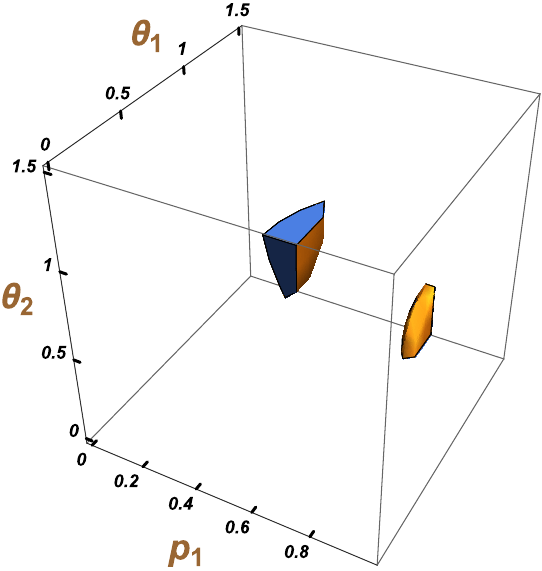}
		\caption{ Shaded region gives the range of damping parameters for which entropy of $ \rho^{(00)}$ and $ \rho^{(01)}$ is less than $1$ whereas that of the initial states is greater than $1$.}
		\label{D34}
	\end{subfigure}
\caption{Absolute to non-absolute regime for the globally depolarized state (Eq.\ref{gdc})}
\end{figure}
\newpage
\subsection{Amplitude damped state}
We have applied the amplitude damping channel over each qubit on the pure state Eq.\ref{ps}. The new state is $\rho_{2\otimes 2}^{'}(Amp)$  given by Eq.\ref{adc}. We have fixed $\theta_{1}=\frac{\pi}{4}$. Here, we have used $2$ copies of $\rho_{Amp}^{'}$ state with the parameters ($p_1,p_2$) and ($p_3,p_4$). So, four conditional states are possible. Entropy for each of $ \rho^{(00)}$ and $ \rho^{(01)}$ is same. There exists a range of damping parameters for which entropy of each of the two copies used in the protocol is greater than 1, whereas entropy of $ \rho^{(00)}$ and $ \rho^{(01)}$ is $\leq 1$ (see Fig.\ref{A12 }). Here, $p_4=0.714286$ and other parameters vary between $0$ and $1$.

Entropy for each of $ \rho^{(10)}$ and $ \rho^{(11)}$ is same. There exists a range of damping parameters for which entropy of each of the two copies used in the protocol is greater than 1, whereas
 entropy of $ \rho^{(10)}$ and $ \rho^{(11)}$ is $\leq 1$ (see Fig.\ref{A34}). Here, $p_4=0.714286$ and other parameters vary between $0$ and $1$.

\begin{figure}[!ht]
	\centering
	\begin{subfigure}[b]{0.3\textwidth}
		\centering
		\includegraphics[width=\textwidth]{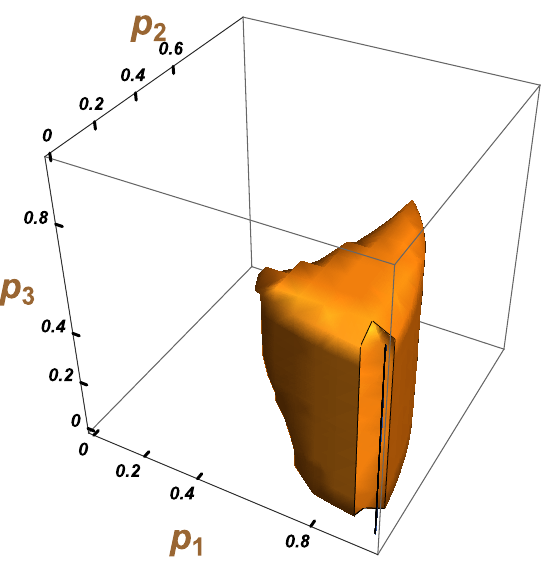}
		\caption{Shaded region gives the range of damping parameters for which entropy of the $ \rho^{(00)}$ and $ \rho^{(01)}$ is less than $1$ but that of each of the two initial states is greater than $1$.}
		\label{A12 }
	\end{subfigure}
	\hfill
	\begin{subfigure}[b]{0.3\textwidth}
		\centering
		\includegraphics[width=\textwidth]{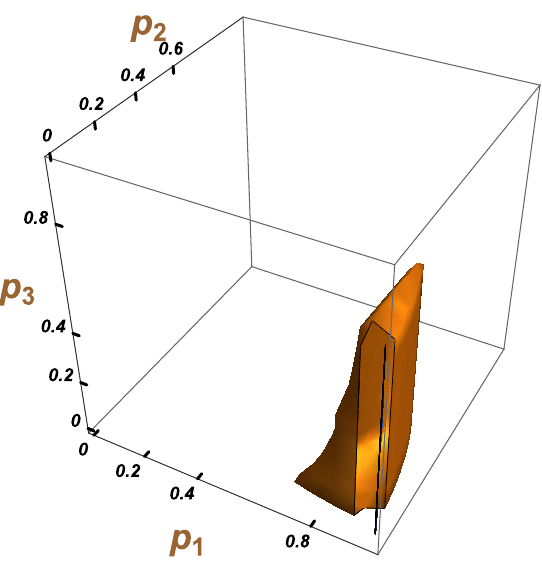}
		\caption{Shaded region gives the range of damping parameters for which entropy of $ \rho^{(10)}$ and $ \rho^{(11)}$ is less than $1$ whereas that of each of the two initial states is greater than $1$.}
		\label{A34}
	\end{subfigure}
	\caption{Absolute to non-absolute regime for the amplitude damped state (Eq.\ref{adc})}
\end{figure}
\subsection{Phase damped state}

We have applied the phase damping channel over each qubit on the pure state Eq.\ref{ps}. The new state is $\rho_{2\otimes 2}^{'}(Ph)$ given by Eq.$\ref{pdc}$.  Here, the entropy of the resultant state will always be less than $1$, so it will never belong to $\mathfrak{AC}$ class. Therefore, there is no need to apply the swapping protocol.

\section{Application} \label{appl}

In this section, we give explicit instances where resources can be retrieved using the swapping network and consequent global unitary action. 

Let us consider the case of negative conditional entropy. As noted before in sec.\ref{swapp}, the swapping protocol can bring the state back into the non-absolute regime. However, the non-absolute regime contains states with negative conditional entropy together with states which possess positive entropy. For the states with positive entropy we need the application of an additional global unitary operator which can give us a state with conditional negative entropy. 

For an illustration, we consider the case of amplitude damping. We have used two non-identical copies of the amplitude damped state $\rho^{'}_{Amp}$ as initial states for the swapping network protocol. The two states are given as below:
\begin{align}
\rho_{1}^{int}= 0.54|00\rangle\langle 00|+0.212132|11\rangle\langle 00|+0.36|01\rangle\langle 01|+0.01|10\rangle\langle 10|+0.212132|00\rangle\langle 11|+0.09|11\rangle\langle 11|
\end{align}
\begin{align}
\rho_{2}^{int}= 0.571429|00\rangle\langle 00|+0.239046|11\rangle\langle 00|+0.0285714|01\rangle\langle 01|+0.285714|10\rangle\langle 10|+0.239046|00\rangle\langle 11|+0.114286|11\rangle\langle 11|
\end{align}
$\rho_{1}^{int}$ is obtained when two qubits of the Bell state $|\phi^{+}\rangle$ are passed through amplitude damping channels with $p_1$$=$$0.8,$ and $p_2$$=$$0.1.$ $\rho_{2}^{int}$ is obtained when two qubits of $|\phi^{+}\rangle$ are passed through amplitude damping channels with $p_3$$=$$0.2,$ $p_4$$=$$0.714286.$ \\
The von-Neumann entropy of $\rho_{1}^{int}$ and $\rho_{2}^{int}$ are $1.06433$ and $1.12407$ respectively, which is $\geq 1$.

The expression of the conditional state obtained on Bob's qubits being projected in Bell state ($|\psi^{+}\rangle$) is
\begin{align}
\rho_{1}^{final}= 0.734694|00\rangle\langle 00|+0.146939|01\rangle\langle 01|+0.103488|10\rangle\langle 01|+0.103488|01\rangle\langle 10|+0.110787|10\rangle\langle 10|+0.00758017|11\rangle\langle 11|
\end{align}
Now, entropy of $\rho_{1}^{final}$ is $0.99883$. Hence, it does not belong to $\mathfrak{AC}$. But, the conditional entropy of $\rho_{1}^{final}$ is $0.377795$, which is not negative. Let us now consider a global unitary operator,

\begin{equation}\label{unitary}
U=\left(\begin{array}{cccc}
		\frac{1}{\sqrt{2}}&0&0&-\frac{1}{\sqrt{2}}\\
		0&1&0&0\\
		0&0&1&0\\
		\frac{1}{\sqrt{2}}&0&0&\frac{1}{\sqrt{2}}\\
	\end{array} \right)
\end{equation}
The application of the global unitary operator $U$ results a modified state $\rho_{new}^{1}=U\rho_{1}^{final}  U^{\dagger}$, given by,
\begin{align}
\rho_{new}^{1}=0.372025|00\rangle\langle 00|+0.364445|00\rangle\langle 11|+0.146939|01\rangle\langle 01|+0.103488|01\rangle\langle 10| \nonumber \\ + 0.103488|10\rangle\langle 01|+0.110787|10\rangle\langle 10|+0.364445|11\rangle\langle 00|+0.372025|11\rangle\langle 11|,
\end{align}
 and the conditional von-Neumann entropy is -0.000227186.
 Similarly, the expression of the conditional state obtained on Bob's qubits being projected in Bell state ($|\phi^{+}\rangle$) is
\begin{align}
\rho_{2}^{final}= 0.806723|00\rangle\langle 00|+0.0994299|11\rangle\langle 00|+0.110924|01\rangle\langle 01|+0.0616247|10\rangle\langle 10|+0.0994299|00\rangle\langle 11|+0.0207283|11\rangle\langle 11|
\end{align}
Entropy of $\rho_{2}^{final}$ is $0.893069$. Hence, it does not belong to $\mathfrak{AC}$. But, the conditional entropy of $\rho_{2}^{final}$ is $0.331117$, which is positive.
The application of the global unitary operator $U$ (eq.\ref{unitary}) results in a modified state $\rho_{new}^{2}=U\rho_{2}^{final}U^{\dagger}$, given by,
\begin{align}
\rho_{new}^{2}=0.314296|00\rangle\langle 00|+0.392997|00\rangle\langle 11|+0.110924|01\rangle\langle 01|+0.0616247|10\rangle\langle 10|+0.392997|11\rangle\langle 00|+0.513156|11\rangle\langle 11|,
\end{align}

 and the conditional von-Neumann entropy of the state is -0.0620408. Hence, we observe that with the help of a swapping network protocol and a suitable global unitary operator, we got a state whose conditional von-Neumann entropy is negative.

 Again let us consider $\rho_2^{int}$ and another amplitude damped state obtained by passing two qubits of $|\phi^{+}\rangle$ through two amplitude damping channels specified by $p_1$$=$$0.8,$ and $p_2$$\in$$[0.05,0.6].$ Let this initial state be denoted by $\rho_1^{int}.$ Both the initial states $\rho_1^{int}$ and $\rho_2^{int}$ are members of $\mathfrak{AC}$. Now, let these states be used in our swapping protocol. Interestingly, corresponding to any of the four possible outputs of the intermediate party, the conditional state shared between the extreme parties does not belong to $\mathfrak{AC}$ for $p_2$$\in$$[0.05,0.1)$ (see Fig.\ref{fignew}).
\begin{figure}[!ht]
	\centering
	\begin{subfigure}[b]{0.4\textwidth}
		\centering
		\includegraphics[width=\textwidth]{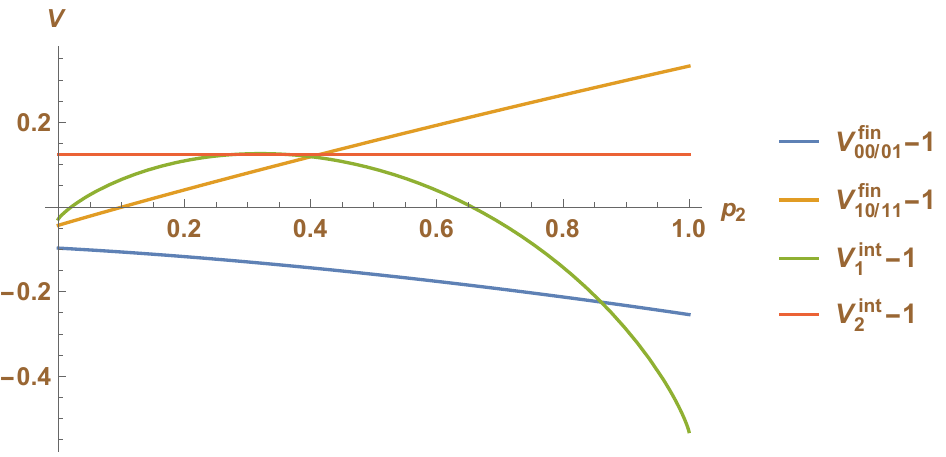}
		\caption{}
	\end{subfigure}
	\hfill
	\begin{subfigure}[b]{0.4\textwidth}
		\centering
		\includegraphics[width=\textwidth]{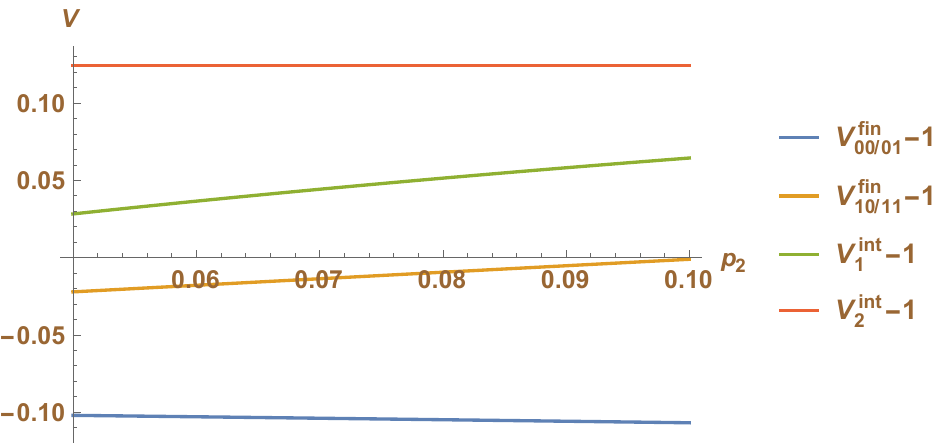}
		\caption{}
	\end{subfigure}
	 \caption{\emph {Von Neumann entropy (V) is plotted against noise parameter  $p_2$ in both the subfigures. For $p_1$$=$$0.8,$ $p_3$$=$$0.2,$ $p_4$$=$$0.714286,$ von Neumann entropy of both the initial states $\rho_1^{int},\rho_2^{int}$ and that of the  final conditional states $\rho^{(ij)}_{fin}$ obtained when the intermediate party gets output $(i,j),$ are plotted here. $V_1^{int}, V_2^{int}$ denote the von Neumann entropy of $\rho_1^{int},\rho_2^{int}$ respectively. $V_{00/01}^{fin}, V_{10/11}^{fin}$ denote the von Neumann entropy of $\rho^{(00)/(01)}_{fin}$ and $\rho^{(10)/(11)}_{fin}$ respectively. Subfig.a shows that for the entire range of the noise parameter $p_2,$  two of the conditional states  $\rho^{(00)/(01)}_{fin}$ are not members of $\mathfrak{AC}$  whereas  one of $\rho_1^{int},\rho_2^{int}$ is a member of $\mathfrak{AC}$. More interestingly, subfig.b shows that there exists range of $p_2$ for which none of the four possible conditional states is a member of $\mathfrak{AC}$ even when both the initial states are so. This in turn points out the effectiveness of our protocol to retrieve back non classicality in the form of negative conditional entropy. }}
	\label{fignew}
\end{figure}

\section{Conclusion}\label{con}

\par Every real-world application of a quantum information processing protocol involves environment interactions. In the present work, we made a probe to see the change in two figure of merits, namely FEF and negative conditional entropy upon environmental interaction. The investigation is to see the transition from a non-absolute regime to an absolute regime, where the term absoluteness is used in the context of global unitary action. We observe that under the action of some quantum channels the state moves to a regime where even a global unitary action fails to retrieve the property. A prescription based on entanglement swapping network protocol that stochastically retrieves the properties have been laid here, towards the purpose. An application of such a prescription is detailed. We have further characterized a class of states whose conditional Rényi entropy remains non-negative under global unitary operations. This further led us to characterize the marginals of a three qudit system with respect to the above-mentioned absolute property. 
\par Our work also provide some usefulness in future direction of work. In \cite{Preskill,Zhou1}, the authors have introduced a method for detecting bipartite entanglement in a many-body mixed state based on estimating moments of the partially transposed density matrix. In this manuscript our main aim is to discuss about two notions of FEF and conditional entropy of quantum states in purview of applying global unitary operations. To be more specific, we have intended to analyze situations where such notions of quantumness enter in the absolute regime due to environmental interactions and when such forms of quantumness can again be retrieved back(in some cases) via entanglement swapping protocol. In this context we are not focussing on adopting strategies based on performing local random measurements and classical post-processings for estimating these properties of quantum states and consider the same as a potential direction of future research. In \cite{Zhou2,Zhou3}  the authors have proposed a systematic method using very few local measurements to detect multipartite entanglement structures based on the graph states. Developing similar strategy to extend our results in section.VI from tripartite to multipartite scenario will be significant. Such an extension can be considered as an interesting direction of future research.

\par FEF and conditional entropies provide for significant benchmarks in quantum information processing protocols. Investigations concerning them are relevant and important, specifically concerning cases when there is a transformation of the density matrix on environmental influence. We believe our work can lend itself to such studies in future.\\

\textbf{Acknowledgement} Tapaswini Patro would like to acknowledge the support from DST-Inspire (INDIA) fellowship No. DST/INSPIRE Fellowship/2019/IF190357. Nirman Ganguly acknowledges support from the project grant received under the SERB(INDIA)-MATRICS scheme vide file number MTR/2022/000101.\\

\section*{Declaration}
All the authors contributed equally to the manuscript. \\
\textbf{Data availability statement:} Data sharing was not applicable to this article as no data sets were generated or analyzed during the current study. \\
\textbf{Competing interests:} The authors have no competing interests to declare. All co-authors have seen and agree with the contents of the manuscript, and there is no financial interest to report.

\bibliography{basename of .bib file}

\end{document}